\newcommand{\D}{\displaystyle}
\newcommand{\rset}{\mathbb{R}}
\newcommand{\T}{^T}
\newcommand{\id}{I}
\newtheorem{proposition}{Proposition}
\newcommand\GC{F(p_i,p_j,d^k,d^h)}
\newcommand\UF{u}
\begin{document}

\title{Bartering integer commodities with exogenous prices}
\author{Stefano Nasini \and Jordi Castro \and Pau Fonseca i Casas\\
\and
\\
  Dept. of Statistics and Operations Research\\
  Universitat Polit\`ecnica de Catalunya\\
}
\date{January 10, 2015}
\maketitle

\abstract{The analysis of markets with indivisible goods and fixed exogenous prices has played an important role in economic models, especially in relation to wage rigidity and unemployment. This research report provides a mathematical and computational details associated to the mathematical programming based approaches proposed by Nasini et al. \cite{NasiniCastroFonseca2014} to study pure exchange economies where discrete amounts of commodities are exchanged at fixed prices. Barter processes, consisting in sequences of elementary reallocations of couple of commodities among couples of agents, are formalized as local searches converging to equilibrium allocations. A direct application of the analyzed processes in the context of computational economics is provided, along with a Java implementation of the approaches described in this research report.}

\bigskip

\textbf{Key words}: Microeconomic Theory, Combinatorial optimization, Multiobjective optimization, Multiagent systems.

%\corref{cor1}
%\cortext[cor1]{Corresponding author}
%\address{Dept. of Statistics and Operations Research\\
%Universitat Polit\`ecnica de Catalunya\\
%Jordi Girona 1--3 \\
%08034 Barcelona \\
%}

\section{Introduction} \

\normalsize
The bargaining problem concerns the allocation of a fixed quantity among a set of self-interested agents. The characterizing element of a bargaining problem is that many allocations might be simultaneously suitable for all the agents.

\newtheorem{Definition}{Definition}

\begin{Definition}
Let $\mathcal{V} \subset R^{n} $ be the space of allocations of an $n$ agents bargaining problem. Points in $\mathcal{V}$ can be compared by saying that $v^{*} \in V$ strictly dominates $v \in V$ if each component of $v^{*}$ is not less than the corresponding component of $v$ and at least one component is strictly greater, that is, $v_{i} \leq v^{*}_{i}$ for each $i$ and $ v_{i} < v^{*}_{i}$  for some $i$. This is written as $v \prec v^{*}$. Then, the Pareto frontier is the set of points of $\mathcal{V}$ that are not strictly dominated by others.

\end{Definition}

Since the very beginning of the Economic Theory \cite{Jevons1871, Edgeworth1881}, the bargaining problem has generally be adopted as the basic mathematical framework for the study of markets of excludable and rivalrous goods and a long-standing line of research focused on axiomatic approaches for the determination of a uniquely allocation, satisfying agent's interests (for details, see Nash \cite{Nash1951} and Rubinstein \cite{Rubinstein1983}).

More recently, an increasing attention has been devoted to the cases where the quantity to be allocated is not infinitesimally divisible. The technical difficulties associated to those markets have been pointed out since Shapley and Shubik \cite{ShapleyShubik1972}, who characterized the equilibria of markets where each agent can consume at most one indivisible good. After them, many authors have been studying markets with indivisible goods (see for example, Kaneko \cite{Kaneko1982}, Quinzii \cite{Quinzii1984}, Scarf \cite{Scarf1994}, and the most recent literature like Danilov et al. \cite{DanilovKoshevoyMurota2001}, Caplin and Leahy \cite{CaplinLeahy2010}). The main focus was to address the question of existence of market clearing prices in the cases of not infinitesimally divisible allocations.

Another subclass of the family of bargaining problems is associated to markets with fixed prices (for details, see Dreze \cite{Dreze1975} and Auman and Dreze \cite{AumanDreze1986}), which have played an important role in maroeconomic  models, especially on those models related to wage rigidities and unemployment. Dreze \cite{Dreze1975} described price rigidity as inequality constraints on individual prices.

Efficient algorithms to find non-dominated Pareto allocations of bargaining problems associated to markets with not infinitesimally divisible goods and fixed exogenous prices have been recently studied by Vazirani et al. \cite{Vazirani2007} and by Ozlen, Azizoglu and Burton \cite{Ozlen2011}. Our goal is to provide novel mathematical-programming based approaches to analyze barter processes, which are commonly used in everyday life by economic agents to solve bargaining problems associated to $n$-consumer-$m$-commodity markets of not infinitesimally divisible goods and fixed exogenous prices. These processes are based on \emph{elementary reallocations} (ER) of two commodities among two agents, sequentially selected from the $m(m-1)n(n-1)/4$ possible combinations. Under fixed prices, markets do not clear and the imbalance between supply and demand is resolved by some kind of quantity rationing \cite{Dreze1975}. In out analysis this quantity rationing is implicit in the process and not explicitly taken into account.

Based on this multi-agent approach, many economical systems might be simulated \cite{Wooldridge2002}, as we will see in the computational application illustrated in section 5.

Section 2 illustrates the fundamental properties of the allocation space. Section 3 provides a general mathematical programming formulation and derives an analytical expression for the Pareto frontier of the \emph{elementary reallocation problem} (ERP). It will be shown that the \emph{sequence of elementary reallocations} (SER) (the chain of ERP performed by agents along the interaction process) follows the algorithmic steps of a local search in the integer allocation space with exogenous prices. Section 4 introduce the case of network structures restricting agents interactions to be performed only among adjacent agents. In section 5 the performance of these barter processes is compared with the one of a global optimization algorithm (branch and cut).

Most of the results presented in this research reports has been studied by Nasini et al. \cite{NasiniCastroFonseca2014}.

%----------------------------------------------------------------------------------------------------------------------------------

\section{The integer allocation space with fixed prices}\

%----------------------------------------------------------------------------------------------------------------------------------

The key characteristic of an economy is: a collection $\mathcal{A}$ of $n$ agents, a collection $\mathcal{C}$ of $m$ types of commodities, a commodity space $X$ (usually represented by the nonnegative orthant in $\mathbb{R}^m$), the initial endowments $q^{i}_{j}\in X$ for $i \in \mathcal{A}$, $j \in \mathcal{C}$ (representing a budget of initial amount of commodities owned by each agent), a preference relation $\preceq_i$ on \emph{X} for each agent $i \in \mathcal{A}$.
Arrow and Debreu \cite{Debreu1983} showed that if the set $\{(x,y)\in X\times X: x \preceq_i y \}$ is closed relative to $X\times X$ the
preference relation can be represented by a real-valued function $\UF^{i}: X \longmapsto \rset$, such that, for each \emph{a} and \emph{b} belonging to \emph{X}, $\UF^{i}(a) \leqslant \UF^{i}(b) $ if and only if $a \preceq b $.

When agents attempt to simultaneously maximize their respective utilities, conditioned to balance constraints, the resulting problems are $\max \UF^{i}(\mathbf{x}) \mbox{ s.to } \sum_{i\in \mathcal{A}} x^{i}_{j} = \sum_{i\in \mathcal{A}} q^{i}_{j} \mbox{ for } j \in \mathcal{C}$, where $x^{i}_{j} \in X$, is the amount of commodity $j$ demanded by agent $i$ (from now on the superindex shall denote the agent and the subindex shall denote the commodity).

Arrow and Debreu \cite{Debreu1983} showed that under certain economic conditions (convex preferences, perfect competition and demand independence) there must be a vector of prices $\widehat{P} = (\widehat{p}_{1}, \widehat{p}_{2}, \widehat{p}_{3}, \ldots , \widehat{p}_{m})\T$, such that aggregate supplies will equal aggregate demands for every commodity in the economy.

As studied by Dreze \cite{Dreze1975}, when prices are regarded as fixed, markets do not clear and the imbalance between supply and demand is resolved by some kind of quantity rationing. The system of linear constraints associated a $n$-consumer-$m$-commodity market with fixed prices exhibits a block angular structure with rank $m+n-1$:

\begin{equation}
\label{eq:cons1}
\left[
\begin{array}{cccc}
p_{1} \, p_{2} \, \ldots \, p_{m} &  &  &  \\
 & p_{1} \, p_{2} \, \ldots \, p_{m} &  &  \\
 &  & \ddots  &\\
 &  &  & p_{1} \, p_{2} \, \ldots \, p_{m} \\
I & I & \dots & I
\end{array}
\right]
\begin{array}{c}
\mathbf{x}
\end{array}
=
\left[
\begin{array}{c}
p_{1}q^{1}_{1} + \ldots + p_{m}q^{1}_{m}\\
p_{1}q^{2}_{1} + \ldots + p_{m}q^{2}_{m}\\
\vdots\\
p_1 q^{n}_{1} + \ldots + p_{m}q^{n}_{m}\\
\mathbf{q}^{1} + \ldots + \mathbf{q}^{n}\\
\end{array}
\right],
\end{equation}
where $p_1, \dots, p_m$ are relative prices between commodities,
$\mathbf{q}^i= (q_1^i,\dots,q_m^i)\T$, and $\mathbf{x}=
(x_1^1,\dots,$ $x_m^1, \dots, x_1^n,\dots, x_m^n)\T$.
The constraints matrix of (\ref{eq:cons1}) could also be written as $
\left( \begin{array}{c} I \otimes P  \\ \mathbf{1} \otimes I  \\ \end{array}
\right)$,  where $ P= (p_{1}, p_{2}, p_{3}, \ldots , p_{m})$ and $ \otimes $
is the Kronecker product between two matrices.
Note that the linking constrains (i.e., the conservation of commodities
$(\mathbf{1} \otimes I)\mathbf{x} = \mathbf{q}^{1} + \ldots + \mathbf{q}^{n}$)
are implied by the balance equations of a network flow among the agents. This
fact will be analyzed in Section 5, where we introduced costs associated to
the flow.

All the feasible allocations lay in a ($m+n-1$) dimensional hyperplane defined
by the prices (always containing at least one solution, which is represented
by the vector of initial endowments $\mathbf{q}$), and restricted to the fact that
agents are rational: $\UF^{i}(\mathbf{x}) \geq \UF^{i}(\mathbf{q})$, for
$i \in \mathcal{V}$.

%An integer bargaining problem with exogenous prices of $m$ commodities among
%$n$ agents on a complete graph is the $n$-objective optimization problem whose
%solution set is made of the nonnegative integer points satifying (\ref{eq:cons1})
%which Pareto dominate the initial endowments and are not Pareto dominated by
%any other solutions of (\ref{eq:cons1}).

Proposition \ref{prop:1} below shows that an asymptotic approximation of an
upper bound of the number of nonnegative solutions of (\ref{eq:cons1}) is
$\mathcal{O}( \frac{n^{(mb)}}{b^m})$, where $b$ is the average amount of each
commodity, i.e., $b = \frac{\sum_{j=1}^{m}(\sum_{h=1}^{n}v^{h}_{j})}{m}$.

\begin{proposition}
\label{prop:1}
Let $\Lambda$ be the set of nonnegative solutions of (\ref{eq:cons1}),
i.e., the allocation space of a problem of bargaining integer amounts of $m$
commodities among $n$ agents with fixed prices. If the allocation space
satisfies the mild conditions $b_j= \sum_{h=1}^{n} v^{h}_{j} \geq n$ and $b_j \in \mathcal{O}(n)$, $j
= 1, \ldots,m$ (where $b_j$
is the overall amount of commodity $j$ in the system), then
$|\Lambda| \in \mathcal{O}(\frac{n^{(mb)}}{b^m})$.\\

\end{proposition}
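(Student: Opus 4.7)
The plan is to bound $|\Lambda|$ by first dropping the $n$ budget equations in~(\ref{eq:cons1}) and counting only the nonnegative integer solutions of the $m$ commodity-conservation equations, and then estimating the resulting product of binomials asymptotically. Removing the budget constraints can only enlarge the feasible set, and what remains, $\sum_{i=1}^{n} x^{i}_{j} = b_{j}$ for $j=1,\dots,m$, decouples across commodities. A standard stars-and-bars count then yields
$$
|\Lambda| \;\leq\; \prod_{j=1}^{m}\binom{b_{j}+n-1}{n-1} \;=\; \prod_{j=1}^{m}\binom{b_{j}+n-1}{b_{j}}.
$$

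Next, I would rewrite each factor in its rising-factorial form $\binom{b_{j}+n-1}{b_{j}} = n(n+1)\cdots(n+b_{j}-1)/b_{j}!$ and estimate it via Stirling. Using $b_{j}\geq n$ and $b_{j}\in\mathcal{O}(n)$, fix $c$ with $b_{j}\leq cn$ for every $j$; then the numerator is at most $((c+1)n)^{b_{j}}$ while Stirling gives $b_{j}! \geq (b_{j}/e)^{b_{j}}\sqrt{2\pi b_{j}}$. Comparing the two shows that, for $n$ large enough and uniformly over $b_{j}\in[n,cn]$,
$$
\binom{b_{j}+n-1}{b_{j}} \;\leq\; \frac{n^{b_{j}}}{b_{j}},
$$
because the $\exp(\Theta(n))$ correction coming out of Stirling is dominated by the super-polynomial growth of $n^{b_{j}}$ in $n$. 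Multiplying across $j$ and using $\sum_{j}b_{j}=mb$ then yields $|\Lambda| \leq n^{mb}/\prod_{j}b_{j}$.

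To finish, I would replace $\prod_{j}b_{j}$ by $b^{m}$ using the two elementary bounds $\prod_{j}b_{j}\geq n^{m}$ (from $b_{j}\geq n$) and $b^{m}\leq c^{m}n^{m}$ (from $b\leq cn$), which together give $\prod_{j}b_{j}\geq b^{m}/c^{m}$ and hence $|\Lambda|\leq c^{m}\,n^{mb}/b^{m} = \mathcal{O}(n^{mb}/b^{m})$ as claimed. I expect the main obstacle to be the uniform Stirling-type estimate on the binomial coefficient in the middle step; once that is in hand, the stars-and-bars count and the comparison between $\prod_{j}b_{j}$ and $b^{m}$ in the regime $b_{j}\in\Theta(n)$ are essentially routine.
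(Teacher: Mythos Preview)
Your proposal is correct and follows essentially the same route as the paper: drop the $n$ price (budget) constraints, apply stars-and-bars to the $m$ decoupled commodity-conservation equations, and then bound the resulting product of binomials asymptotically under $b_j\in[n,cn]$. The only difference is bookkeeping in that last step---the paper replaces $\prod_j b_j!$ by $b^m$ in one stroke (using $b_j\ge n\ge 2$), while you go through Stirling to reach $n^{b_j}/b_j$ and then compare $\prod_j b_j$ with $b^m$; both arrive at the same $\mathcal{O}(n^{mb}/b^m)$ bound.
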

\begin{proof}
\label{proof:1}

The set of nonnegative solutions of (\ref{eq:cons1}) is a subset of the union of bounded
sets, as
$\Lambda \subset \bigcup_{j=1}^{m} \{(x^{1}_{j} \ldots x^{n}_{j})$ $\in
\mathbb{R}^{n} : x^{1}_{j} + \ldots + x^{n}_{j}\; =\;  v^{1}_{j} + \ldots +
v^{n}_{j}; x^{1}_{j} \ldots x^{n}_{j} \geq 0 \}$. Therefore, $\Lambda$ is a
finite set, as it is the intersection between $\mathbb{Z}$ and a bounded
subset of $\mathbb{R}^{mn}$. Let $\Lambda'$ be the set of nonnegative
solutions of (\ref{eq:cons1}), without considering the price
constraints, i.e., the $n$ diagonal blocks $p_{1}x_{1}^{h} + p_{2}x_{2}^{h} +
\ldots + p_{m}x_{m}^{h} = p_{1}v_{1}^{h} + p_{2}v_{2}^{h} + \ldots +
p_{m}v_{m}^{h}$, for $h = 1, \ldots, n $. We know that $|\Lambda'| \geq
|\Lambda|$. However, $|\Lambda'|$ can be easily calculated, as the number of
solutions of $m$ independent Diophantine equations with unitary
coefficients. The number of nonnegative integer solutions of any equation of
the form $\sum_{h=1}^{n}x_j^{h} = b_j, j=1,\dots,m,$ might be seen as the number of distributions
of $b_j$ balls among $m$ boxes:  $\frac{(n + b_j-1)!}{(n-1)! \,
  b_j!}$. Since we have $m$ independent Diophantine equations of this
form,
 then the number of
possible solutions for all of them is  $\prod_{j=1}^{m} \frac{(n +
  b_{j}-1)!}{(n-1)! \, b_{j}!}$. Thus, we know that
$|\Lambda| \leq \prod_{j=1}^{m} \frac{(n + b_{j}-1)(n + b_{j}-2) \ldots
  n}{b_{j}!} \leq \prod_{j=1}^{m} \frac{(n + b_{j}-1)^{b_j}}{b_{j}!} \leq
\frac{\prod_{j=1}^{m} (n + b_{j}-1)^{b_j}}{b^m} $, where the last inequality
holds because $b_j \ge n\ge 2$. Since $b_j \in \mathcal{O}(n)$
we have that $\frac{\prod_{j=1}^{m}
  (n + b_{j}-1)^{b_j}}{b^m} \in \mathcal{O}(\frac{n^{(mb)}}{b^m})$. Hence,
$|\Lambda| \in \mathcal{O}(\frac{n^{(mb)}}{b^m})$.

\end{proof}

The set of nonnegative solutions of (\ref{eq:cons1}) represent the allocation space associated to a market with fixed prices where the quantity to be allocated is not infinitesimally divisible. The technical difficulties associated to those markets have been pointed out since Shapley and Shubik \cite{ShapleyShubik1972}, who characterized the equilibria of markets where each agent can consume at most one indivisible good. After them, many authors have been studying markets with indivisible goods (see for example, Kaneko \cite{Kaneko1982}, Quinzii \cite{Quinzii1984}, Scarf \cite{Scarf1994}, and the most recent literature like Danilov et al. \cite{DanilovKoshevoyMurota2001}, Caplin and Leahy \cite{CaplinLeahy2010}). The main focus was to address the question of existence of market clearing prices in the cases of not infinitesimally divisible allocations.

Another subclass of the family of bargaining problems is associated to markets with fixed prices (for details, see Dreze \cite{Dreze1975} and Auman and Dreze \cite{AumanDreze1986}), which have played an important role in maroeconomic  models, especially on those models related to wage rigidities and unemployment. Under fixed prices, markets do not clear and the imbalance between supply and demand is resolved by some kind of quantity rationing \cite{Dreze1975}. In out analysis this quantity rationing is implicit in the process and not explicitly taken into account.

We now set the problem of bargaining integer amounts of $m$ commodities among $n$ agents with fixed prices in a general mathematical programming framework. The aim is to construct a local search in the allocation space, based on as a sequence of elementary reallocations.

As previously seen, the linear system characterizing the space of possible allocations is (\ref{eq:cons1}). Here the conservation of commodity (i.e., the overall amount of commodity of each type must be preserved) is generalized to include arbitrary weights in the last $m$ rows of (\ref{eq:cons1}). Based on this observation consider, Nasini et al. \cite{NasiniCastroFonseca2014} proposed the following multi-objective integer non-linear optimization problem (MINOP):
\begin{subequations}\label{eq:fixed_prices_MINOP}
\begin{equation}
\hspace{-4cm} \max \quad [\UF^{i}(\mathbf{x}),~ i = 1, \ldots, n]\\
\end{equation}
\hspace{3.7cm} s. to\\
\begin{equation}
\label{eq:cons3}
\left[
\begin{array}{cccc}
P &  &  &  \\
 & P &  &  \\
 &  & \ddots  &\\
 &  &  & P \\
d^{1}I & d^{2}I & \dots & d^{n}I
\end{array}
\right]
\begin{array}{c}
\mathbf{x}
\end{array}
=
\left[
\begin{array}{c}
b^{1}\\
b^{2}\\
\vdots\\
b^{n}\\
\mathbf{b}^{0}\\
\end{array}
\right]\\
\end{equation}
\begin{equation}
\label{eq:cons4}
\begin{array}{l}
\hspace{-16mm} \UF^{i}(\mathbf{x}) \geq \UF^{i}(\mathbf{q}) \quad i = 1 \ldots, n \\
\hspace{-16mm} \mathbf{x} \in \mathbb{Z}^{mn} \geq 0,\\
\end{array}
\end{equation}
\end{subequations}
where $\UF^i:\rset^{mn}\to \rset$, $P \in \mathbb{Q}^{1 \times m}$, $ d^{i} \in \mathbb{Q} $, $ b^i \in \mathbb{Q}$, $i=1,\dots,n$, and $ \mathbf{b}^0 \in \mathbb{Q}^{m}$. The conditions $\UF^{i}(\mathbf{x}) \geq \UF^{i}(\mathbf{q}), \quad i = 1 \ldots, n$, guarantee that no agent gets worse under a feasible reallocation, which is known in general bargaining literature as the \emph{disagreement point}. The constraint matrix has a primal block-angular structure with $n$ identical diagonal blocks involving $m$ decision variables. Problem (\ref{eq:cons1}) is a particular case of (\ref{eq:fixed_prices_MINOP}) for $ d_{i} = 1, i = 1, \ldots, n$.

From a multi-objective optimization point of view, a suitable technique to generate the Pareto frontier of (\ref{eq:fixed_prices_MINOP}) is the $\varepsilon$-constraint method, which is based on converting all but one objectives into constraints. By varying the lower bounds of these constraints the exact Pareto front can theoretically be generated. This multi-objective optimization technique was proposed by Haimes, Lasdon and Wismert \cite{Haimes1971}. Recently, efficient algorithms to find non-dominated Pareto allocations of bargaining problems associated to markets with not infinitesimally divisible goods and fixed exogenous prices have been studied by Vazirani et al. \cite{Vazirani2007} and by Ozlen, Azizoglu and Burton \cite{Ozlen2009, Ozlen2011}, who developed a general approach to generate all nondominated objective vectors, by recursively identifying upper bounds on individual objectives using problems with fewer objectives.
%In the case of particularly difficult combinatorial constraints, the use of evolutionary algorithms has been quite popular in the last decades. A review of the wide range of approaches proposed to solve multi-objective integer optimization problems can be found in \cite{Deb2001}.

%------------------------------------------------------------------------------------------------------------------------------

\subsection{A specialized interior point method for markets with fixed prices}\
\label{Section:5.2.1}

%------------------------------------------------------------------------------------------------------------------------------

We introduce in this section a specialized interior point method to deal with the continuous relaxation of (\ref{eq:fixed_prices_MINOP}), as long as the utility functions $\UF^{h}(\mathbf{x})$, for $h=1, \ldots, n$, are concave. This method is based on the the specialized point algorithm for block-angular linear programs, introduced by Castro \cite{Castro00, Castro07}.

Consider a modified version of problem (\ref{eq:fixed_prices_MINOP}), in which the linking constraints are relaxed in the form of inequalities: $[d^1 I \quad I  \dots   \quad d^n I]\mathbf{x} + \mathbf{x}_0 = \mathbf{b}^{0}$, where $0 \leq \mathbf{x} \leq u_v$ and $0 \leq \mathbf{x}_0 \leq u_s$; the integrality constraints are relaxed, so that $\mathbf{x} \in \rset^{mn} \geq 0$ and the multi-objective utility function is replaced by the aggregated utility: $\sum_{h=1}^{n}\alpha_{h}\UF^{h}(\mathbf{x})$, where $\alpha_1, \ldots, \alpha_n$ are positive weights. The inequalities associated to the disagreement point (agents rationality) are replaced by equality constraints: $\UF^{h}(\mathbf{x}) - \UF^{h}(\mathbf{q}) - s^{h} = 0$, for $h = 1 \ldots, n$, where $0 \leq \mathbf{s} \leq u_s$ are slack variables, for $h = 1 \ldots n$.

We call this modified version of (\ref{eq:fixed_prices_MINOP}) the Modified Continuous Allocation Problem with Fixed Prices (MCAPFP). Note that when $u_v$ goes to zero and $u_s$ goes to infinity the $\mathbf{x}$ solution of the MCAPFP coincides with the one of the maximization of $\sum_{h=1}^{n}\alpha_{h}\UF^{h}(\mathbf{x})$ in the continuous relaxation of nonnegative solutions of (\ref{eq:cons1}). If we let $A \in \mathbb{Q}^{n+m \times mn+m}$ be the coefficient matrix associated to MCAPFP, the resulting $\mu$-KKT conditions \cite{Wright96} are:
\begin{equation} \label{eq:KKT_mu_bartering}
\begin{array}{rcll}
              A\mathbf{x} &=& \mathbf{b}, \\
              \UF^{h}(\mathbf{x}) - \UF^{h}(\mathbf{q})  - s^{h} &=& 0 & h = 1 \ldots, n,\\
\displaystyle A\T \mathbf{y} + \mathbf{z}_v - \mathbf{w}_v + \displaystyle \sum_{h=1}^{n} t^{h} \left[\begin{array}{c} \nabla \UF^{h}(\mathbf{x}) \\ \hline \mathbf{0} \end{array} \right] & = & \displaystyle \sum_{h=1}^{n} \alpha^{h} \left[\begin{array}{c} \nabla \UF^{h}(\mathbf{x}) \\ \hline \mathbf{0} \end{array} \right] \\
       T \mathbf{e}_s + \mathbf{z}_s - \mathbf{w}_s  &=& 0 \\
       X Z_v \mathbf{e}_v & = & \mu \mathbf{e}_v,  \\
       (U-X) W_v \mathbf{e}_v &=& \mu \mathbf{e}_v, \\
       S Z_s \mathbf{e}_s & = & \mu \mathbf{e}_s,  \\
       (U-S) W_s \mathbf{e}_s &=& \mu \mathbf{e}_s, \\
\end{array}
\end{equation}
where $\mathbf{e}_v \in \rset^{nm+m}$ and $\mathbf{e}_s \in \rset^{n}$ are a vectors of ones; $\mathbf{y}\in \rset^{m+n}$ and $\mathbf{z}_v,\mathbf{w}_v \in \rset^{nm+m}_{+\cup\{0\}}$ are the Lagrange multipliers (or dual variables) of $A \mathbf{x} = \mathbf{b}$ and $\mathbf{x}\ge0$, $\mathbf{x}\le u_v$ respectively; similarly, $\mathbf{t} = [t_1 \ldots t_n]\T \in \rset^{n}$ is the vector of Lagrangian multipliers of $\UF^{h}(\mathbf{x}) - \UF^{h}(\mathbf{q}) + s_h = 0$, for $h = 1 \ldots, n$ and $\mathbf{z}_s,\mathbf{w}_s \in \rset^{2n}_{+\cup\{0\}}$ are the Lagrange multipliers of $\mathbf{s} \ge 0 $, $\mathbf{s}\le u$ respectively. Primal variables must be inside the intervals $0 < \mathbf{x} < u_v$, $0 < \mathbf{s} < u_s$, $0 < \mathbf{x}_0 < u_v$. Matrices $X,Z_v,U_v,W_v \in \rset^{(nm + m) \times (nm + m)}$ are diagonal matrices made up of  vectors $\mathbf{x},\mathbf{z}_v,\mathbf{u}_v,\mathbf{w}_v$; matrices $S,T,Z_s,U_s,W_s \in \rset^{n \times n}$ are diagonal matrices made up of vectors $\mathbf{s},\mathbf{t},\mathbf{z}_s,\mathbf{u}_s,\mathbf{w}_s$. Matrix $T \in \rset^{n \times n}$ is diagonal with components $t_1, \ldots, t_n$.

Applying Newton method to (\ref{eq:KKT_mu_bartering}) and reducing the barrier parameter $\mu$ at each iteration, we have that the $\mathbf{x}$ solution of (\ref{eq:KKT_mu_bartering}) converge to the optimal allocation of the MCAPFP. The Newton's direction $(\Delta_{x}, \Delta_{s}, \Delta_{y}, \Delta_{t}, \Delta_{z_v}, \Delta_{z_s}, \Delta_{w_v}, \Delta_{w_s})$ is obtained by solving the following system in each iteration.
\begin{equation}\label{eq:KKT_jacobian_bartering}
\left[\begin{array}{cc|cc|cc|cc}
 A   &      &        &     &              &                          &           &                               \\
 V   & -I   &        &     &              &                          &           &                               \\ \hline
 Q   &      &  A^{T} & V\T & \multicolumn{2}{c|}{\multirow{2}{*}{I}} &  \multicolumn{2}{c}{\multirow{2}{*}{-I}}  \\
     &      &        &  -I &                                         &                                           \\ \hline
Z_v  &      &        &     &     X        &                          &          &                                \\
     & Z_s  &        &     &              &                S         &          &                                \\ \hline
-W_v &      &        &     &              &                          & U_v - X  &                                \\
     & -W_s &        &     &              &                          &          & U_s - S                        \\
\end{array}\right]
\left[\begin{array}{c}
\Delta_{x} \\
\Delta_{s} \\ \hline
\Delta_{y} \\
\Delta_{t} \\ \hline
\Delta_{z_v} \\
\Delta_{z_s} \\ \hline
\Delta_{w_v} \\
\Delta_{w_s}
\end{array}\right] =
%-\mathbf{F}(\mathbf{x}_k, \mathbf{y}_k, \mathbf{z}_k, \mathbf{w}_k)
\left[\begin{array}{c}
       \mathbf{r}_1 \\
       \mathbf{r}_2 \\ \hline
       \mathbf{r}_3 \\
       \mathbf{r}_4 \\ \hline
       \mathbf{r}_5 \\
       \mathbf{r}_6 \\ \hline
       \mathbf{r}_7 \\
       \mathbf{r}_8 \\
\end{array}\right]
\end{equation}
where the right-hand term is defined as
\begin{equation}
\begin{array}{ll}
       \mathbf{r}_1 &= A\mathbf{x} - \mathbf{b} \\
       \mathbf{r}_2 &= \left[\begin{array}{c} \UF^{1}(\mathbf{x}) - \UF^{1}(\mathbf{q}) - s^{1} \\ \vdots \\ \UF^{n}(\mathbf{x}) - \UF^{n}(\mathbf{q}) - s^{n} \end{array} \right]\\
       \mathbf{r}_3 &= A^{T} \mathbf{y} + \mathbf{z} - \mathbf{w}  - \displaystyle \sum_{h=1}^{n} (t^{h}- \alpha^h) \left[\begin{array}{c} \nabla \UF^{1}(\mathbf{x}) \\ \hline \mathbf{0} \end{array} \right]\\
       \mathbf{r}_4 &= T \mathbf{e}_s + \mathbf{z}_s - \mathbf{w}_s \\
       \mathbf{r}_5 &= X Z_v \mathbf{e}_v - \mu \mathbf{e}_v \\
       \mathbf{r}_6 &= X Z_s \mathbf{e}_s - \mu \mathbf{e}_s \\
       \mathbf{r}_7 &= (U_v-X) W_v \mathbf{e}_v - \mu \mathbf{e}_v \\
       \mathbf{r}_8 &= (U_s-X) W_s \mathbf{e}_s - \mu \mathbf{e}_s \\
\end{array}.
\end{equation}
Under the assumptions that $\D\frac{\partial\UF^{h}(\mathbf{x})}{\partial_{x_{i}^k}}=0$ for $h\not=k$ (i.e., $\UF^h$ only depends on $\mathbf{x}^h$ ), which are quite reasonable requirements for consumer utilities, then matrix $Q(\mathbf{x})$ results to be block-diagonal:
\begin{equation}\label{eq:StructB}
Q(\mathbf{x}) =
\left[\begin{array}{cccc|c}
     Q^1(\mathbf{x}) &                  &        &                 &            \\
                     & Q^2(\mathbf{x})  &        &                 &            \\
                     &                  & \ddots &                 &            \\
                     &                  &        & Q^n(\mathbf{x}) &            \\
                     \hline
                     &                  &        &                 & \mathbf{0}
\end{array}
\right],
\end{equation}
where, for each agent $h = 1, \ldots, n$ and each couple of commodities $i,j = 1, \ldots, m$, we have $Q_h(\mathbf{x}) \in \rset^{m \times m}$ to be defined as:
\begin{equation} \label{eq:pdf_right_hand_term}
Q^{h}_{ij}(\mathbf{x}) = \displaystyle (t^{h}-\alpha^{h}) \frac{\partial^2 \UF^{h}(\mathbf{x})}{\partial_{x_{j}^h}\partial_{x_{i}^h}}.
\end{equation}
Matrix $V \in \rset^{mn+m \times n}$ is also block-diagonal:
\begin{equation}\label{eq:StructB}
V =
\left[\begin{array}{cccc|cc}
     \nabla \UF^{1}(\mathbf{x})\T &                               &        &                              &   & \multirow{4}{*}{0}    \\
                                  & \nabla \UF^{2}(\mathbf{x})\T  &        &                              &       \\
                                  &                               & \ddots &                              &       \\
                                  &                               &        & \nabla \UF^{n}(\mathbf{x})\T &       \\
\end{array}
\right].
\end{equation}
By collecting variables $\Delta_{\widehat{x}} = [\Delta_{x} ~|~ \Delta_{s}]$, $\Delta_{\widehat{y}} = [\Delta_{y} ~|~ \Delta_{t}]$, $\Delta_{\widehat{z}} = [\Delta_{z_v} ~|~ \Delta_{z_s}]$ and $\Delta_{\widehat{w}} = [\Delta_{w_v} ~|~ \Delta_{w_s}]$ and performing elementary row operations, system (\ref{eq:KKT_jacobian_bartering}) might be reduced to
\begin{equation}\label{eq:KKT_jacobian_bartering_reduced}
\left[\begin{array}{ccccc}
    \widehat{A}                                 &                 &       &      \\
    \widehat{\Theta}                            & \widehat{A}^{T} &       &      \\
    \widehat{X}^{-1}\widehat{Z}                 &                 &  I    &      \\
    -(\widehat{U}-\widehat{X})^{-1}\widehat{W}  &                 &       &   I  \\
\end{array}\right]
\left[\begin{array}{c}
\Delta_{\widehat{x}} \\
\Delta_{\widehat{y}} \\
\Delta_{\widehat{z}} \\
\Delta_{\widehat{w}}
\end{array}\right] =
\left[\begin{array}{c}
       \widehat{\mathbf{r}}_1 \\
       \widehat{\mathbf{r}}_2 \\
       \widehat{\mathbf{r}}_3 \\
       \widehat{\mathbf{r}}_4 \\
\end{array}\right],
\end{equation}
where $\widehat{A} \in \rset^{2(n+m) \times (nm+n+m)}$ is defined as:
\begin{equation}
\widehat{A} = \left[\begin{array}{cc}
   A    &      \\
   V    &  -I
\end{array}
\right].
\end{equation}
and $\widehat{Z} \in \rset^{(nm+n+m) \times (nm+n+m)}$, $\widehat{X} \in \rset^{(nm+n+m) \times (nm+n+m)}$, $\widehat{W} \in \rset^{(nm+n+m) \times (nm+n+m)}$, $\widehat{U} \in \rset^{(nm+n+m) \times (nm+n+m)}$ are also defined by concatenating the corresponding diagonal matrices in (\ref{eq:KKT_jacobian_bartering}), as well as the right-hand term: $\widehat{\mathbf{r}}_4 = -(\widehat{U}-\widehat{X})^{-1}[\mathbf{r}_7 ~|~ \mathbf{r}_8]$,  $\widehat{\mathbf{r}}_3 = \widehat{X}^{-1}[\mathbf{r}_5 ~|~ \mathbf{r}_6]$, $\widehat{\mathbf{r}}_2 = [\mathbf{r}_3 ~|~ \mathbf{r}_4] - \widehat{\mathbf{r}}_3 + \widehat{\mathbf{r}}_4$, $\widehat{\mathbf{r}}_1 = [\mathbf{r}_1 ~|~ \mathbf{r}_2]$. Thus, variables $\Delta_{z_v}, \Delta_{z_s}, \Delta_{w_v}$ and $\Delta_{w_s}$ might be eliminated after solving the indefinite augmented form:
\begin{equation}
\left[\begin{array}{cccc}
        \widehat{A}      &          \\
      \widehat{\Theta}   & \widehat{A}^{T}
\end{array}\right]
\left[\begin{array}{c}
\Delta_{\widehat{x}} \\
\Delta_{\widehat{y}}
\end{array}\right] =
\left[\begin{array}{c}
       \widehat{\mathbf{r}}_1 \\
       \widehat{\mathbf{r}}_2
\end{array}\right],
\end{equation}
Matrix $\widehat{\Theta} \in \rset^{(nm + m + n) \times (nm + m + n)}$ is
\begin{equation}
\widehat{\Theta} = \left[\begin{array}{cc}
   \Theta^{x}  &      \\
                &  \Theta^{s}
\end{array}
\right] = \left[\begin{array}{cc}
   Q - X^{-1}Z_v + (U_v - X)W_v  &      \\
                                 &  - S^{-1}Z_s + (U_s - S)W_s
\end{array}
\right],
\end{equation}
where $\Theta^{x} \in \rset^{(nm + m) \times (nm + m)}$ and $\Theta^{s} \in \rset^{n \times n}$.
Multiplying by $-A\Theta^{-1}$ the last block of equations and summing it to the first one, we obtain that the coefficient matrix of the system to be solved to compute $\Delta_{y}$ is
\begin{equation}\label{eq:bartering_PRblock_iteration}
\begin{array}{rcl}
\widehat{A}\widehat{\Theta} \widehat{A}\T &=& \left[\begin{array}{cc}
   A\Theta^{x}A\T    &  A\Theta^{x}V\T    \\
   V\Theta^{x}A\T    &  V\Theta^{x}V\T - \Theta^{s}
\end{array}\right] \\ \\
&=&{\small \left[\begin{array}{ccc|c|ccc}
  P \Theta_{1}^{x} P\T &        &                        & d_1 P \Theta_{1}^{x} &  P \Theta_{1}^{x}  \nabla_{\UF^1} &        &       \\
                       & \ddots &                        &  \vdots              &                                   & \ddots &       \\
                       &        & P \Theta_n P\T         & d_n P \Theta_{n}^{x} &                                   &        & P \Theta_{n}^{x} \nabla_{\UF^1}\T   \\
                       &        &                        &                      &                                   &        &       \\
      \hline
                       &        &                        &                      &                                   &        &       \\
d_1 \Theta_{1}^{x} P\T & \ldots & d_n \Theta_{n}^{x} P\T & \Theta_0 + \displaystyle \sum_{h=1}^n d_h^2 \Theta_{h}^{x} &  d_1 \Theta_{1}^{x}  \nabla_{\UF^1}  & \ldots & d_n \Theta_{n}^{x} \nabla_{\UF^n}   \\
                       &        &                        &                      &                                    &        &                               \\
      \hline
               &        &                &            &                             &        &                               \\
\nabla_{\UF^1}\T \Theta_{1}^{x} P \T &        &                 & d_1 \nabla_{\UF^1}\T \Theta_{1}^{x} &  \nabla_{\UF^1}\T \Theta_{1}^{x} \nabla_{\UF^1} - \Theta_{1}^{s} &        &                 \\
               & \ddots &                 & \vdots         &                 & \ddots &                                      \\
               &        & \nabla_{\UF^n}\T \Theta_{n}^{x} P\T  & d_n \nabla_{\UF^n}\T \Theta_{n}^{x}  &                 &        & \nabla_{\UF^n}\T \Theta_{n}^{x} \nabla_{\UF^n}  - \Theta_{n}^{s}\\
\end{array}
\right] } \\ \\
& = &
\left[
\begin{array}{c|c|c}
      B     & C_0        & C_1          \\ \hline

      C_0\T & D_0        & D_{\nabla}\T \\ \hline
      C_1\T & D_{\nabla} & D_1
\end{array} \right],
\end{array}
\end{equation}
where
\begin{equation}
\nabla_{\UF^h} = \left[\begin{array}{c}
\displaystyle \frac{\partial \UF^{h}(\mathbf{x})}{\partial_{x_{m}^{h}}} \\
\vdots \\
\displaystyle \frac{\partial \UF^{h}(\mathbf{x})}{\partial_{x_{m}^{h}}} \\
\end{array}\right] \qquad h = 1, \ldots, n.
\end{equation}
Thus, by noting that the first $n$ components of the Newton direction $\Delta_{y}$ are associated to the block-angular constraints $\sum_{i \in \mathcal{C}} p_i x_i^h = \sum_{i \in \mathcal{C}} p_i q_i^h$, for $h = 1, \ldots, n$, whereas the second $m$ components of $\Delta_{y}$ are associated to the linking constraints $\sum_{h \in A} x_i^h = \sum_{h \in A} v_i^h$, for $i = 1, \ldots, m$, we define $\Delta_{y} = [\Delta_{y_1} \mbox{ } \Delta_{y_2}]$ and see that the system to be solved to compute $\Delta_{y}$ is
\begin{equation}
\begin{array}{rcl}
\widehat{A}\widehat{\Theta} \widehat{A}\T \Delta_{\widehat{y}} & = &
\left[\begin{array}{cc|c}
      B     & C_0        & C_1          \\

      C_0\T & D_0        & D_{\nabla}\T \\ \hline
      C_1\T & D_{\nabla} & D_1
\end{array} \right] \left[ \begin{array}{c}
      \Delta_{y_1}  \\
      \Delta_{y_2}  \\ \hline
      \Delta_{t}
\end{array} \right] = \left[ \begin{array}{c}
      g_{1}  \\
      g_{2}  \\ \hline
      g_{3}
\end{array} \right] \\ \\
 & = & \widehat{\mathbf{r}}_1 - \widehat{A}\widehat{\Theta}^{-1}\widehat{\mathbf{r}}_2,
 \end{array}
\end{equation}
so that we can sequentially solve the following two systems
\begin{equation}  \label{eq:system_bartering_delta_y2}
\left[\begin{array}{ccc}
  D_0 -  C_0\T B^{-1} C_0         &  & D_{\nabla}\T -  C_0\T B^{-1} C_1   \\
  D_{\nabla} - C_1\T B^{-1} C_0   &  & D_1 -  C_1\T B^{-1} C_1
\end{array} \right]\left[ \begin{array}{c}
      \Delta_{y_2}  \\
      \Delta_{t}
\end{array} \right] = \left( \left[ \begin{array}{c}
      g_{2}  \\
      g_{3}  \\
\end{array} \right] - \left[ \begin{array}{c}
      C_0\T  \\
      C_1\T  \\
\end{array} \right] B^{-1}g_1 \right),
\end{equation}

\begin{equation}\label{eq:system_bartering_delta_y1}
\begin{array}{c}
B \Delta_{y_1} = \left( g_1 - \left[ \begin{array}{cc}
      C_0  & C_1
\end{array} \right]\left[ \begin{array}{c}
      \Delta_{y_2}  \\
      \Delta_{t}
\end{array} \right] \right)
\end{array}
\end{equation}
System (\ref{eq:system_bartering_delta_y1}) is directly solvable, as $B \in \rset^{n \times n}$ is diagonal, so that the main computational effort is to solve (\ref{eq:system_bartering_delta_y2}). However, the structure of (\ref{eq:system_bartering_delta_y2}) might also been exploited, by noting that $D_1 -  C_1\T B^{-1} C_1 \in \rset^{n \times n}$ is a diagonal matrix and rewriting (\ref{eq:system_bartering_delta_y2}) in the form
\begin{equation}\label{eq:structure_bartering_delta_y2}
\begin{array}{r}
\left[\begin{array}{c|cccc}
    \Theta_0+ \sum_{h=1}^n d_h^2 \Upsilon_{h} & & d_1^2 \Upsilon_{1}\nabla_{\UF^1}                                  & \ldots & d_n^2 \Upsilon_{n}\nabla_{\UF^n}  \\
      & & & & \\
      \hline
      & & & & \\

      d_1^2 \nabla_{\UF^1}\T \Upsilon_{1}    & & \nabla_{\UF^1}\T \Upsilon_{1}\nabla_{\UF^1}-\Theta_{1}^{s} &        &            \\
      \vdots                                 & &                                                            & \ddots &            \\
      d_n^2 \nabla_{\UF^n}\T \Upsilon_{n}    & &                    &        & \nabla_{\UF^n}\T\Upsilon_{n}\nabla_{\UF^n}-\Theta_{n}^{s}
\end{array}\right]\left[ \begin{array}{c}
      \Delta_{y_2}  \\
      \Delta_{t}
\end{array} \right] \\ \\
= \left[ \begin{array}{c c}
      D_{\Upsilon}   & C_{\Upsilon} \\
      C_{\Upsilon}\T & B_{\Upsilon}
\end{array} \right] \left[\begin{array}{c}
      \Delta_{y_2}  \\
      \Delta_{t}
\end{array}\right] = \left[ \begin{array}{c}
      g_{2}  - C_0\T B^{-1}g_1\\
      g_{3}  - C_1\T B^{-1}g_1\\
\end{array} \right],
\end{array}
\end{equation}
where
\begin{equation}
\Upsilon_{h} = \Theta_{h}^{x} - \frac{\Theta_{h}^{x}P\T P \Theta_{h}^{x}}{P \Theta_{h}^{x} P\T}, \qquad h = 1, \ldots, n.
\end{equation}
By eliminating $\Delta_{t}$ from the first group of equations in (\ref{eq:structure_bartering_delta_y2}), we obtain
\begin{subequations}
\begin{eqnarray}
(D_{\Upsilon}- C_{\Upsilon}\T B^{-1}_{\Upsilon} C_{\Upsilon}) \Delta_{y_2} & = & g_{\Upsilon_{1}}  \label{eq:main_system_bartering_ATA}\\
B_{\Upsilon}\Delta_{t} & = & g_{\Upsilon_{2}} \label{eq:second_system_bartering_ATA}
\end{eqnarray}
\end{subequations}
where $g_{\Upsilon_{1}} = g_{2}  - C_0\T B^{-1}g_1 - g_{2}  - C_{\Upsilon}\T B^{-1}_{\Upsilon}(g_{3}  - C_1\T B^{-1}g_1)$ and $g_{\Upsilon_{2}} = g_{3}  - C_1\T B^{-1}g_1 - C_{\Upsilon}\Delta_{y_2}$. Since $B_{\Upsilon}$ is diagonal, $\Delta_{t}$ can be directly obtained, so that solving (\ref{eq:KKT_jacobian_bartering}) -- a system of size $2(n+m)$ -- reduced to the much smaller problem (\ref{eq:main_system_bartering_ATA}) -- a system of size $m$ --.

\subsection{The elementary reallocation problem}

%----------------------------------------------------------------------------------------------------------------------------------

The nice properties of the specialized interior point method cannot be exploited when dealing with indivisible goods and combinatorial algorithm might be taken into account. The aim of this section is to consider a general bartering scheme which is unambiguously applied to both discrete and continuous allocation spaces.

In everyday life, barter processes among people tends to achieve the Pareto frontier of problem (\ref{eq:fixed_prices_MINOP}) by a sequence of reallocations. We consider a process based on a sequence of two-commodity-two-agent reallocations, denoted as SER. Any step of this sequence requires the solution of a MINOP involving 4 variables and 4 constraints of problem
(\ref{eq:fixed_prices_MINOP}).

% Label each diagonal block as $1, 2, \ldots, n$, in accordance with the order
% they appear in (\ref{eq:cons3}), and also the variables of the diagonal blocks
% $h=1,\dots,n$ as $x_{i}^{h}$ for $i=1, \ldots, m$.
Let $\mathbf{q}$ be a feasible solution of
(\ref{eq:cons3}) and (\ref{eq:cons4}) and suppose we want to produce a
feasible change of 4 variables, such that 2 of them belong to the $i$th and
$j$th position of the diagonal block $h$ and the other belong to the $i$th
and $j$th position of the diagonal block $k$.

It can be easily shown that a feasibility condition of any affine change of
these 4 variables $q_{i}^{h} +  \Delta_{i}^{h}, q_{i}^{k}+ \Delta_{i}^{k},
q_{j}^{h} + \Delta_{j}^{h}, q_{j}^{k} + \Delta_{j}^{k}$ is that
$\Delta_{i}^{h}, \Delta_{i}^{k}, \Delta_{j}^{h}, \Delta_{j}^{k}$ must be an
integer solution of the following system of equations
\begin{equation}
\label{eq:cons5}
\left[
\begin{array}{cccc}
p_{i} & p_{j} & 0 & 0 \\
0 & 0 & p_{i} & p_{j} \\
d^{h} & 0 & d^{k} & 0 \\
0 & d^{h} & 0 & d^{k}\\
\end{array}
\right]
\left[
\begin{array}{c}
\Delta_{i}^{h}\\
\Delta_{j}^{h}\\
\Delta_{i}^{k}\\
\Delta_{j}^{k}\\
\end{array}
\right]
=
\left[
\begin{array}{c}
0\\
0\\
0\\
0\\
\end{array}
\right].
\end{equation}
The solution set are the integer points in the null space of the matrix of
system (\ref{eq:cons5}), which will be named $A$. $A$ is a
two-agent-two-commodity constraint matrix, and its rank is
three (just note that the first column is a linear combination of the
other three using coefficients $\alpha_2= \frac{p_i}{p_j}$, $\alpha_3=\frac{d^h}{d^k}$ and
$\alpha_4= -\frac{p_i d^h }{p_j d^k}$).
% This is easily seen by observing that the first column $A_{.1}$ can be
%written as $A_{.1}= \sum_{i=2}^4 \alpha_i A_{.i}$, for
%$\alpha_2= \frac{p_i}{p_j}$, $\alpha_3=\frac{d^h}{d^k}$ and
%$\alpha_4= -\frac{p_i d^h }{p_j d^k}$.
Therefore the null space has dimension one, and its integer solutions are
found on the line
\begin{equation}
\left[
\begin{array}{c}
\Delta_{i}^{h}\\
\Delta_{j}^{h}\\
\Delta_{i}^{k}\\
\Delta_{j}^{k}\\
\end{array}
\right]
= w
\left[
\begin{array}{r}
p_{j}d^{k}\\
-p_{i}d^{k}\\
-p_{j}d^{h}\\
p_{i}d^{h}\\
\end{array}
\right],
\end{equation}
for some $w = \alpha \GC$, where $ \alpha \in \mathbb{Z}$ and $F:\mathbb{Q}^4 \to
\mathbb{Q}$ provides a factor which transforms the null space direction in
the nonzero integer null space direction of smallest norm. We note that this factor can be computed as
$\GC= G(p_{j}d^{k},p_{i}d^{k},p_{j}d^{h},p_{i}d^{h})$, where
\begin{equation}
G(v_i=\frac{r_i}{w_i},i=1,\dots,l)=
\frac{\hbox{lcm}(w_i,i=1,\dots,l)}{\hbox{gcd}(\hbox{lcm}(w_i,i=1,\dots,l)\cdot v_i,i=1,\dots,l)},
\end{equation}
$r_i$ and $w_i$ being the numerator and denominator of $v_i$ ($w_i=1$ if $v_i$
is integer), and lcm and gcd
being, respectively, the least common multiple and greatest common
divisor functions.

Hence, given a feasible point $\mathbf{q}$, one can choose 4 variables, such that 2 of
them belong to the $i${th} and $j${th} position of a diagonal block $h$ and
the others belong to the $i${th} and $j${th} position of a diagonal block
$k$, in $m(m-1)n(n-1)/4$ ways. Each of them constitutes an ERP, whose Pareto
frontier is in $\mathbf{q} + \hbox{null}(A)$. The SER is a local search, which repeatedly explores
a neighborhood and chooses both a locally improving direction among the
$m(m-1)n(n-1)/4$ possible ERPs and a feasible step length $q = \alpha
\GC$, $ \alpha \in \mathbb{Z}$. For
problems of the form of (\ref{eq:fixed_prices_MINOP}) the SER might be written as follows:
\begin{equation}
\label{eq:cons6}
\mathbf{x}^{t+1} = \mathbf{x}^{t} + \alpha \GC \left[
\begin{array}{c}
\vdots      \\
p_{j}d^{k}  \\
\vdots      \\
- p_{i}d^{k}\\
\vdots      \\
- p_{j}d^{h}\\
\vdots      \\
p_{i}d^{h}  \\
\vdots      \\
\end{array}
\right]
\begin{array}{c}
\vdots  \\
 h,i \\
\vdots \\
 h,j \\
\vdots \\
 k,i \\
\vdots \\
 k,j \\
\vdots \\
\end{array}
= \mathbf{x}^{t} + \alpha \GC \Delta_{ij}^{kh},
\end{equation}
$t$ being the iteration counter. In shorter notation,
we write (\ref{eq:cons6}) as $\mathbf{x}^{t+1} = \mathbf{x}^{t} + \alpha
S_{ij}^{kh}$, where
\begin{equation}
\label{eq:sijkh}
S_{ij}^{kh}= \GC \Delta_{ij}^{kh}
\end{equation}
is a direction
of integer components. Since
the nonnegativity of $\mathbf{x}$ have to be kept along the iterations, then
we have that
\begin{equation}
\label{eq:alpharange}
\D
  -\frac{\max\left\{x_{i}^{h}/(p_{j}d^{k}) , x_{j}^{k}/(p_{i}d^{h})\right\}}{\GC}
\le \alpha \le
   \frac{\min\left\{x_{j}^{h}/(p_{i}d^{k}) , x_{i}^{k}/(p_{j}d^{h})\right\}}{\GC},
\end{equation}
or, equivalently,
\begin{equation}
\label{eq:qrange}
\D
  -\max\left\{x_{i}^{h}/(p_{j}d^{k}) , x_{j}^{k}/(p_{i}d^{h})\right\}
\le w \le
   \min\left\{x_{j}^{h}/(p_{i}d^{k}) , x_{i}^{k}/(p_{j}d^{h})\right\}.
\end{equation}
(The step length is forced to be nonnegative when the direction is both feasible and a descent direction; in our case the direction is only known to be feasible, and then negative step lengths are also considered.)

An important property of an elementary reallocation is that under the assumptions that $\D\frac{\partial \UF^{k}(\mathbf{x})}{\partial_{x_{i}^k}} :\rset^{mn} \to \rset $ is (i) non increasing, (ii) nonnegative and (iii) $\D\frac{\partial
  \UF^{k}(\mathbf{x})}{\partial_{x_{i}^j}}=0$ for $j\not=k$ (i.e., $\UF^k$ only depends on $\mathbf{x}^k$ ), which are quite
reasonable requirements for consumer utilities, then $u^{k}(\mathbf{x}+\alpha S_{ij}^{kh})$ is a unimodal function with respect to $\alpha$, as shown by the next proposition.

\begin{proposition}
\label{prop:2}
Under the definition of $u^{k}$ and $S_{ij}^{kh}$, for every feasible point $\mathbf{x} \in R^{mn}$, $u^{k}(\mathbf{x}+\alpha S_{ij}^{kh})$ is a unimodal function with respect to $\alpha$ in the interval defined by (\ref{eq:alpharange}).

\end{proposition}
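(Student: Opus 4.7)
The plan is to prove that the one-dimensional function $\phi(\alpha) := \UF^{k}(\mathbf{x} + \alpha S_{ij}^{kh})$ has a non-increasing derivative on the feasibility interval (\ref{eq:alpharange}), from which unimodality follows at once.

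First I would use hypothesis (iii) to collapse $\phi$ to a two-coordinate problem. Since $\UF^k$ depends only on agent $k$'s bundle, and the direction $S_{ij}^{kh}$ defined in (\ref{eq:sijkh}) perturbs agent $k$'s allocation only at coordinates $i$ and $j$, with coefficients $-\GC\,p_j d^h$ and $+\GC\,p_i d^h$ respectively, the displacement $\alpha S_{ij}^{kh}$ alters $\UF^k$ only through $(x_i^k, x_j^k)$. Differentiating by the chain rule gives
\[
\phi'(\alpha) \;=\; -\GC\, p_j d^h\, \frac{\partial \UF^{k}}{\partial x_i^k}\bigl(\mathbf{x}(\alpha)\bigr) \;+\; \GC\, p_i d^h\, \frac{\partial \UF^{k}}{\partial x_j^k}\bigl(\mathbf{x}(\alpha)\bigr),
\]
where $\GC,p_i,p_j,d^h$ are all strictly positive, so the first summand carries a negative scalar coefficient and the second a positive one.

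The central step is to show that each summand is a non-increasing function of $\alpha$. As $\alpha$ grows, $x_i^k$ strictly decreases along the trajectory while every other argument of $\UF^k$ either stays fixed or, in the case of $x_j^k$, strictly increases. By hypothesis (i), $\partial \UF^{k}/\partial x_i^k$ is non-increasing, so its value along the trajectory is non-decreasing in $\alpha$; multiplying by the negative scalar $-\GC p_j d^h$ flips this to a non-increasing function of $\alpha$. Symmetrically, $x_j^k$ strictly increases, so $\partial \UF^{k}/\partial x_j^k$ is non-increasing in $\alpha$, and the positive scalar $\GC p_i d^h$ preserves that monotonicity. The sum of two non-increasing functions is non-increasing, hence $\phi'$ is non-increasing on (\ref{eq:alpharange}). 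A function of one variable whose derivative is non-increasing is concave, and in particular unimodal: either $\phi'$ keeps a constant sign on the interval (so $\phi$ is monotone), or it crosses zero exactly once from positive to negative, giving a unique interior maximum.

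I expect the most delicate point to be the interpretation of hypothesis (i) when $\UF^k$ is non-separable. Under the standard economic reading of diminishing marginal utility, for instance when $\UF^k$ is separable, $\partial \UF^{k}/\partial x_i^k$ is non-increasing in $x_i^k$ alone and independent of the other coordinates, making the monotonicity of each summand immediate. In the general case one must read (i) componentwise in $\mathbf{x}^k$, so that the simultaneous increase of $x_j^k$ does not spoil the monotonicity of $\partial \UF^{k}/\partial x_i^k$ along the path; this componentwise reading is exactly what the argument above implicitly requires.
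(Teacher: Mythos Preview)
Your argument is correct and follows the same route as the paper: compute $\phi'(\alpha)$ by the chain rule, isolate the two nonzero partials of $\UF^{k}$ along the trajectory, and use hypothesis~(i) together with the signs of the components of $S_{ij}^{kh}$ to control how these partials evolve in~$\alpha$. The paper organises the conclusion slightly differently---it shows only the sign-preservation property ``$\phi'(\alpha)<0 \Rightarrow \phi'(\alpha+\tau)<0$ for $\tau>0$'' rather than the stronger monotonicity of $\phi'$ that you derive---but the underlying inequalities on the partial derivatives are identical, and your concavity formulation is in fact a cleaner packaging of the same computation. Your closing remark on the componentwise reading of hypothesis~(i) in the non-separable case is well taken: the paper's proof tacitly makes the same assumption when it infers (\ref{eq:aux2}) and (\ref{eq:aux3}) from the change in a single coordinate while the other moving coordinate is ignored.
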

\begin{proof}
Let us define $g(\alpha) = u^{k}(\mathbf{x}+\alpha S_{ij}^{kh})$, differentiable with respect to
$\alpha$. It will be shown that for all $\alpha$ in the interval (\ref{eq:alpharange}), and
$0 < \tau \in\rset$, $g'(\alpha) < 0$ implies $g'(\alpha+\tau) < 0$, which is
a sufficient condition for the unimodality of $g(\alpha)$.
By the chain rule, and using (\ref{eq:cons6}) and (\ref{eq:sijkh}), the derivative of $g(\alpha)$ can be
written as
\begin{equation}
\label{eq:gderiv}
\begin{array}{rl}
g'(\alpha) &= \nabla_{\mathbf{x}}u^{k}(\mathbf{x}+\alpha S_{ij}^{kh})
S_{ij}^{kh} \\
&= \D \GC \left(
  \frac{\partial \UF^{k}(\mathbf{x}+\alpha S_{ij}^{kh})}{\partial_{x_{i}^k}}
  (-p_j d^h) +
  \frac{\partial \UF^{k}(\mathbf{x}+\alpha S_{ij}^{kh})}{\partial_{x_{j}^k}} p_i d^h
\right).
\end{array}
\end{equation}
If $g'(\alpha) < 0$ then, from (\ref{eq:gderiv}) and since $\GC>0$, we have that
\begin{equation}
\label{eq:aux1}
 \frac{\partial \UF^{k}(\mathbf{x}+\alpha S_{ij}^{kh})}{\partial_{x_{i}^k}}
  p_j d^h >
 \frac{\partial \UF^{k}(\mathbf{x}+\alpha S_{ij}^{kh})}{\partial_{x_{j}^k}} p_i d^h.
\end{equation}
Since from (\ref{eq:cons6}) the component $(k,i)$ of $S_{ij}^{kh}$ is $\GC (-p_jd^h)<0$,
and $\D\frac{\partial \UF^{k}(\mathbf{x})}{\partial_{x_{i}^k}}$ is non
increasing, we have that for $\tau>0$
\begin{equation}
\label{eq:aux2}
   \frac{\partial \UF^{k}(\mathbf{x}+(\alpha+\tau) S_{ij}^{kh})}{\partial_{x_{i}^k}}
   \ge
 \frac{\partial \UF^{k}(\mathbf{x}+\alpha S_{ij}^{kh})}{\partial_{x_{i}^k}}.
\end{equation}
Similarly, since the component $(k,j)$ of $S_{ij}^{kh}$ is $\GC (p_id^h)>0$,
we have
\begin{equation}
\label{eq:aux3}
 \frac{\partial \UF^{k}(\mathbf{x}+\alpha S_{ij}^{kh})}{\partial_{x_{j}^k}}
   \ge
   \frac{\partial \UF^{k}(\mathbf{x}+(\alpha+\tau) S_{ij}^{kh})}{\partial_{x_{j}^k}}.
\end{equation}
Multiplying both sides of (\ref{eq:aux2}) and (\ref{eq:aux3}) by,
respectively, $p_jd^h$ and $p_id^h$, and connecting the resulting
inequalities with (\ref{eq:aux1}) we have that
\begin{displaymath}
 \frac{\partial \UF^{k}(\mathbf{x}+(\alpha+\tau) S_{ij}^{kh})}{\partial_{x_{i}^k}}
  p_j d^h >
 \frac{\partial \UF^{k}(\mathbf{x}+(\alpha+\tau) S_{ij}^{kh})}{\partial_{x_{j}^k}} p_i d^h,
\end{displaymath}
which proofs that $g'(\alpha+\tau) < 0.$
\end{proof}

Using Proposition \ref{prop:2} and the characterization of the space of integer solutions of (\ref{eq:cons5}), we are able to derive a closed expression of the Pareto frontier of the ERP, based on the behavior of $\UF(\mathbf{x}+\alpha S_{ij}^{kh})$ (see Corollary \ref{corol:1} below), as it is shown in this example:

{\small
\newtheorem{Example}[theorem]{Example}
\begin{Example}[]
Consider the following ERP with initial endowments $[40, \, 188, \, 142, \, 66]$.

{\small
\begin{equation}
\label{eq:example}
  \begin{array}{rl}
    \max & [2 - e^{-0.051 x_{1}^{1}} - e^{-0.011 x_{2}^{1}}, 2 - e^{-0.1 x_{1}^{2}} - e^{-0.031 x_{2}^{2}}] \\
    \\
    \hbox{s. to} \\
        &  5x^{1}_{1} + 10x^{1}_{2} \, =\,  2080\\
        &  5x^{2}_{1} + 10x^{2}_{2} \, =\,  1370\\
        \\
        &  5x^{1}_{1} + 6x^{2}_{1} \, =\,  1052\\
        &  5x^{1}_{2} + 6x^{2}_{2} \, =\,  1336\\
        \\
        & 2 - e^{-0.05 x_{1}^{1}} - e^{-0.01 x_{2}^{1}} \geq 1.68\\
        & 2 - e^{-0.1 x_{1}^{2}} - e^{-0.031 x_{2}^{2}} \geq 1.50\\
        \\
        &  x^{i}_{j} \geq 0 \; \in Z \; i = 1,2; \; j= 1,2;\\
  \end{array}
\end{equation} \\
}
The utility functions $g^1(\alpha)=u^{1}(\mathbf{x}+\alpha S_{12}^{12})$ and
$g^2(\alpha)=u^{2}(\mathbf{x}+\alpha S_{12}^{12})$ % as a function of $\alpha$
are
$$
\begin{array}{rl}
g^1(\alpha)=u^{1}(\mathbf{x}+\alpha S_{12}^{12}) &= u^{1}\left(
\left[
\begin{array}{c}
40  \\
188 \\
142 \\
66  \\
\end{array}
\right]
+ \alpha \left[
\begin{array}{c}
12   \\
- 6  \\
- 10 \\
5    \\
\end{array}
\right]\right) = 2 - e^{-0.051(40+12\alpha)} - e^{-0.011(188-6\alpha)}\\
\\
g^2(\alpha)=u^{2}(\mathbf{x}+\alpha S_{12}^{12}) &= u^{2}\left(
\left[
\begin{array}{c}
40  \\
188 \\
142 \\
66  \\
\end{array}
\right]
+ \alpha \left[
\begin{array}{c}
12   \\
- 6  \\
- 10 \\
5    \\
\end{array}
\right]\right) = 2 - e^{-0.1(142-10\alpha)} - e^{-0.031(66+5\alpha)},
\end{array}
$$
which are plotted in Figure \ref{fig:example}.
\begin{figure}
\label{fig:example}
\begin{center}
\includegraphics[scale=0.16]{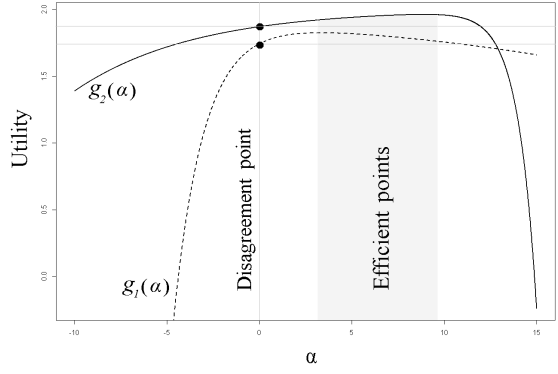} \\
\caption{\small {Plots of $g^1(\alpha)$ and $g^2(\alpha)$, and interval of $\alpha$
  associated to the Pareto frontier. The disagreement point corresponds to
  $g^1(0)$ and $g^2(0)$, the utilities in the current iterate.}}
\end{center}
\end{figure}
The continuous optimal step lengths for the two respective agents are $\textnormal{argmax}$
$g^1(\alpha) = 3.33$ and $\textnormal{argmax}$
$g^2(\alpha) = 8.94$. Due to the unimodality of
$u^{k}(\mathbf{x}+\alpha S_{ij}^{hk})$, all efficient solutions of (\ref{eq:example}) are given by
integer step lengths $\alpha \in [3.33, \,8.94]$ (see Fig. \ref{fig:example}),
i.e., for $\alpha \in \{4,5,6,7,8\}$ we have
$$
\begin{array}{lllll}
g^{1}(4) = 1.82412  &  g^1(5)= 1.81803  & g^1(6)= 1.80882 & g^1(7)= 1.79752 & g^1(8)= 1.78465, \\
g^{2}(4) = 1.93043 &  g^2(5)= 1.94035  & g^2(6)= 1.94873  & g^2(7)= 1.95558 & g^2(8)= 1.96057. \\
\end{array}
$$
Due to the unimodality of both utility functions with respect
to $\alpha$, no efficient solution exists for an $\alpha$ outside
the segment $[3.33, \, 8.94]$.

\end{Example}}
\normalsize

The above example illustrates a case where the segment between $\textnormal{argmax } u^{h} (\mathbf{x} + \alpha S_{ij}^{kh})$ and $\textnormal{argmax }u^{k} (\mathbf{x} + S_{ij}^{kh})$ contains five integer points, associated with the feasible step lengths.

The following statements give a constructive characterization of the Pareto frontier of an ERP for the case of concave utility function and linear utility functions respectively.

\newtheorem{Corollary}[]{Corollary}
\begin{Corollary}
\label{corol:1}

Let $\Gamma$ be the set of integer points in the interval $[a^{down},a^{up}]$, where $a^{down} = \min \{ \textnormal{argmax}_\alpha \UF^{k}(\mathbf{x} + \alpha S_{ij}^{kh})$, $\textnormal{argmax}_\alpha \UF^{h}(\mathbf{x} + \alpha S_{ij}^{kh})\}$ and $a^{up} = \max \{
\textnormal{argmax}_\alpha \UF^{k} (\mathbf{x} + \alpha S_{ij}^{kh})$, $\textnormal{argmax}_\alpha \UF^{h} (\mathbf{x} + \alpha S_{ij}^{kh})\}$, and let$[\alpha^{down}, \alpha^{up}]$ be the interval of feasible step lengths defined in  (\ref{eq:alpharange}). Then, due to Proposition \ref{prop:2}, the set $\mathcal{V}^*$ of Pareto efficient solutions of an ERP can be obtained as follows:

\begin{enumerate}[i.]

\item $\mathcal{V}^* = \{ [u^{h} (\mathbf{x} + \alpha S_{ij}^{kh}), u^{k} (\mathbf{x} +
\alpha S_{ij}^{kh})] \, : \; \alpha \in \Gamma \}$ if $\Gamma$ is not
empty and does not contain the zero.

\item If $\Gamma$ is empty and there exists an integer point between $0$ and
$a^{down}$ but no integer point between  $a^{up}$ and $\alpha^{up}$ then $\mathcal{V}^*$
contains the unique point given by $[u^{h} (\mathbf{x} + \alpha S_{ij}^{kh}, \;
u^{k} (\mathbf{x} + \alpha S_{ij}^{kh})] $such that $\alpha$ is the greatest
integer between 0 and $a^{down}$.
 
\item If $\Gamma$ is empty and there exists an integer point between $a^{up}$
and $\alpha^{up}$ but no integer point between $0$ and $a^{down}$ then $\mathcal{V}^*$
contains either the unique point given by $[u^{h} (\mathbf{x} + \alpha
S_{ij}^{kh}, \; u^{k} (\mathbf{x} + \alpha S_{ij}^{kh})]$ such that $\alpha$ is
the smallest integer between $a^{up}$ and $\alpha^{up}$, or $\alpha = 0$, or
both of them if they do not dominate each other. (In this case the three
possibilities must be checked, since if for only one of the utilities ---let it
be $h$, for instance--- $\UF^{h}(\mathbf{x}) > \UF^{h}(\mathbf{x}+ \bar\alpha S_{ij}^{kh})$,
$\bar\alpha$ being the smallest integer between $a^{up}$
and $\alpha^{up}$, then both values $0$ and $\bar\alpha$ are Pareto efficient.)
 
\item If $\Gamma$ is empty and there are integer points both between $a^{up}$
and $\alpha^{up}$ and between $0$ and $a^{down}$ then $V^*$ contains the points
given by $[u^{h} (\mathbf{x} + \alpha S_{ij}^{kh}, \; u^{k} (\mathbf{x} + \alpha
S_{ij}^{kh})]$ such that $\alpha$ is either the smallest integer between
$a^{up}$ and $\alpha^{up}$, or the greatest integer between 0 and $a^{down}$, or
both points if they do not dominate each other.

\item In the case that $\Gamma$ contains the zero, then no point dominates
the initial endowment $\mathbf{x}$, so that the only point in the Pareto
frontier is $\mathbf{x}$.\\
\end{enumerate}

\end{Corollary}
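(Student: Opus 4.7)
The plan is to leverage Proposition~\ref{prop:2} as the central tool: along the one-dimensional line $\mathbf{x}+\alpha S_{ij}^{kh}$, the two affected utilities $g^h(\alpha)=\UF^h(\mathbf{x}+\alpha S_{ij}^{kh})$ and $g^k(\alpha)=\UF^k(\mathbf{x}+\alpha S_{ij}^{kh})$ are unimodal in $\alpha$. After relabelling so that $a^{down}$ is the argmax of $g^h$ and $a^{up}$ is the argmax of $g^k$, unimodality yields that $g^h$ is non-decreasing on $[\alpha^{down},a^{down}]$ and non-increasing on $[a^{down},\alpha^{up}]$, while $g^k$ is non-decreasing on $[\alpha^{down},a^{up}]$ and non-increasing on $[a^{up},\alpha^{up}]$.

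From this I would deduce a dominance rule for any pair of feasible integer step lengths $\alpha_1<\alpha_2$. If both lie in $[\alpha^{down},a^{down}]$, then both utilities weakly improve from $\alpha_1$ to $\alpha_2$, so $\alpha_2$ dominates $\alpha_1$. If both lie in $[a^{up},\alpha^{up}]$, the opposite holds and $\alpha_1$ dominates $\alpha_2$. If both lie in the central segment $[a^{down},a^{up}]$ they are mutually incomparable, since $g^h$ is weakly decreasing and $g^k$ weakly increasing there. This prunes the candidate set for $\mathcal{V}^*$ to three disjoint pools: the rightmost feasible integer in $[\alpha^{down},a^{down}]$, every integer of $\Gamma=[a^{down},a^{up}]\cap\mathbb{Z}$, and the leftmost feasible integer in $[a^{up},\alpha^{up}]$.

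The five items of the corollary then follow by combining this pruning with the fact that $\alpha=0$ corresponds to the disagreement point $\mathbf{x}$. In case~i, $\Gamma$ is nonempty and avoids $0$, so every integer in $\Gamma$ is mutually non-dominated and dominates any boundary candidate via the monotonicity on the flanks. Cases~ii--iv treat $\Gamma=\emptyset$ according to which side of $[a^{down},a^{up}]$ contributes a candidate: on the left side both utilities improve strictly from $\alpha=0$, so the greatest integer below $a^{down}$ dominates $\mathbf{x}$, whereas a candidate beyond $a^{up}$ sits past both argmaxes and requires a direct comparison with $\mathbf{x}$, which is precisely why three sub-possibilities must be enumerated. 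Case~v ($0\in\Gamma$) is handled by observing that unimodality forces $g^h$ to be non-increasing for $\alpha\geq 0$ and $g^k$ non-increasing for $\alpha\leq 0$, so the disagreement constraint $\UF^i(\mathbf{x}+\alpha S_{ij}^{kh})\geq \UF^i(\mathbf{x})$ rules out every $\alpha\neq 0$, leaving $\mathbf{x}$ as the unique Pareto point.

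The step I expect to be the main obstacle is the cross-comparison in cases~iii and~iv, because the outer-right candidate lies past both argmaxes and is not automatically comparable with $\mathbf{x}$; explicit dominance checks in each direction, and between the two outer candidates when both are present, are genuinely needed, and the outcome depends on the shape of the individual utilities rather than on monotonicity alone. All the other cases reduce to direct applications of Proposition~\ref{prop:2} together with the three-pool pruning above.
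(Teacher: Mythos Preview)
The paper does not provide a formal proof of this corollary: it is stated as a direct consequence of the unimodality in Proposition~\ref{prop:2} and illustrated only by the worked two-agent example. Your plan---exploiting unimodality to split the feasible $\alpha$-interval into a left flank (both utilities rising), the central segment $[a^{down},a^{up}]$ (one rising, one falling), and a right flank (both falling), and then pruning to the three candidate pools---is exactly the argument the paper has in mind, spelled out in more detail than the paper itself offers.

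There is, however, a real gap in your handling of case~i. You assert that when $\Gamma\neq\emptyset$ the points of $\Gamma$ ``dominate any boundary candidate via the monotonicity on the flanks.'' Monotonicity on the left flank only tells you that the rightmost flank integer $\alpha_1=\lfloor a^{down}\rfloor$ dominates every smaller flank integer; it says nothing about the comparison between $\alpha_1$ and the leftmost element $\gamma_1=\lceil a^{down}\rceil$ of $\Gamma$. Crossing from $\alpha_1$ to $\gamma_1$ you certainly gain in $g^k$ (still rising toward $a^{up}$), but $g^h$ has already passed its maximiser $a^{down}$, so $g^h(\alpha_1)$ and $g^h(\gamma_1)$ lie on opposite sides of a peak with no ordering forced by unimodality. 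In the paper's own example one computes $g^1(3)\approx1.825>g^1(4)\approx1.824$, so $\alpha=3$ is not dominated by $\alpha=4$ (nor by any larger integer, since $g^1$ decreases past $3.33$). Hence the flank integers $\lfloor a^{down}\rfloor$ and $\lceil a^{up}\rceil$ require the same explicit dominance check you correctly flag as the ``main obstacle'' in cases~iii--iv; your three-pool pruning is sound, but the final elimination of the boundary pool in case~i does not follow from monotonicity alone.
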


\begin{Corollary}\label{corol:2}

Consider the case of an economy where agents have linear utility functions with gradients $\mathbf{c}^{1}, \ldots, \mathbf{c}^{n}$ and let again $\Gamma$ be the set of integer points in the interval $[a^{down}, a^{up}]$, where $a^{down} = \min \{ \textnormal{argmax}_\alpha \alpha \mathbf{c}^{k}S_{ij}^{kh}$, $\textnormal{argmax}_\alpha \alpha \mathbf{c}^{h} S_{ij}^{kh}\}$ and $a^{up} = \max \{\textnormal{argmax}_\alpha \alpha \mathbf{c}^{k}S_{ij}^{kh}$, $\textnormal{argmax}_\alpha \alpha \mathbf{c}^{h} S_{ij}^{kh}\}$, and let $[\alpha^{down}, \alpha^{up}]$ be the interval of feasible step lengths defined in (\ref{eq:alpharange}). It might be easily seen that either $\Gamma = \mathbb{Q}$ or $\Gamma=\varnothing$. The set $\Gamma=\mathbb{Q}$ in the case $(c_{i}^{h}p_{j}d^{k} - c_{j}^{h}p_{i}d^{k})$ and $(c_{j}^{k}p_{i}d^{h} - c_{i}^{k}p_{j}d^{h})$ have opposite sighs, whereas $\Gamma=\varnothing$ if $(c_{i}^{h}p_{j}d^{k} - c_{j}^{h}p_{i}d^{k})$ and $(c_{j}^{k}p_{i}d^{h} - c_{i}^{k}p_{j}d^{h})$ have the same sign. Then, due to Proposition \ref{prop:2}, the set $\mathcal{V}^*$ of Pareto efficient solutions of an ERP may contain at most one point:

\begin{enumerate}[i.]

\item if there is at least one non-null integer between $-\max \{x_{i}^{h}/(p_{j}d^{k})$, $x_{j}^{k}/(p_{i}d^{h})\}/\GC$ and $\min \{x_{j}^{h}/(p_{i}d^{k})$, $x_{i}^{k}/(p_{j}d^{h})\}/\GC$ and $\Gamma=\varnothing$, then $\mathcal{V}^*$ only contains the unique point corresponding to the allocation $\mathbf{x}^{t+1} = \mathbf{x}^{t} + \alpha S_{ij}^{kh}$ for a step-length $\alpha$  which is either equal to $-\max \{x_{i}^{h}/(p_{j}d^{k})$ , $x_{j}^{k}/(p_{i}d^{h})\}/\GC$ (if $(c_{i}^{h}p_{j}d^{k}$ - $c_{j}^{h}p_{i}d^{k})$ and $(c_{j}^{k}p_{i}d^{h}$ - $c_{i}^{k}p_{j}d^{h})$ are negative) or for equal to $\min \{x_{j}^{h}/(p_{i}d^{k})$, $x_{i}^{k}/(p_{j}d^{h})\}/\GC$ (if $(c_{i}^{h}p_{j}d^{k}$ - $c_{j}^{h}p_{i}d^{k})$ and $(c_{j}^{k}p_{i}d^{h}$ - $c_{i}^{k}p_{j}d^{h})$ are positive).

\item $\mathcal{V}^*$ only contains the disagreement point in the opposite case.
\end{enumerate}

\end{Corollary}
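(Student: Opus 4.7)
The plan is to exploit the linearity of the two involved utilities along the one-dimensional search line $\mathbf{x}+\alpha S_{ij}^{kh}$. Substituting the explicit coordinates of $S_{ij}^{kh}$ given by (\ref{eq:cons6})--(\ref{eq:sijkh}) into $\mathbf{c}^h S_{ij}^{kh}$ and $\mathbf{c}^k S_{ij}^{kh}$ yields
$$\beta^h \;=\; \GC\,(c_i^h p_j d^k - c_j^h p_i d^k), \qquad \beta^k \;=\; \GC\,(c_j^k p_i d^h - c_i^k p_j d^h),$$
so that $u^h(\mathbf{x}+\alpha S_{ij}^{kh}) = u^h(\mathbf{x})+\alpha\beta^h$ and analogously for $u^k$. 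Since the normalising factor $\GC>0$, the signs of $\beta^h$ and $\beta^k$ coincide respectively with those of the two bracketed quantities appearing in the statement.

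Next I would determine $\Gamma$. For an affine map $\alpha\mapsto \alpha\beta$ the unconstrained argmax is $+\infty$ when $\beta>0$ and $-\infty$ when $\beta<0$; in particular the argmax is never attained at a finite $\alpha$. Hence, if $\beta^h$ and $\beta^k$ share a sign both argmaxes coincide at $\pm\infty$ and the degenerate interval $[a^{down},a^{up}]$ contains no integer, giving $\Gamma = \varnothing$. If they carry opposite signs, one argmax is $-\infty$ and the other $+\infty$, so $[a^{down},a^{up}]=\rset$ and $\Gamma$ is the entire integer set, which the statement writes as $\mathbb{Q}$.

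The crucial step is to combine this with the rationality constraints $u^h(\mathbf{x}+\alpha S_{ij}^{kh})\ge u^h(\mathbf{x})$ and $u^k(\mathbf{x}+\alpha S_{ij}^{kh})\ge u^k(\mathbf{x})$ inherited from (\ref{eq:cons4}). Under linearity these reduce to $\alpha\beta^h\ge 0$ and $\alpha\beta^k\ge 0$. When $\beta^h$ and $\beta^k$ share a sign, the admissible $\alpha$ lie on a single half-line of $\rset$ on which both objectives grow monotonically with $|\alpha|$; hence every feasible interior step is strictly dominated by the endpoint lying in the direction of growth. By (\ref{eq:alpharange}) this endpoint is $\alpha^{up}=\min\{x_j^h/(p_id^k),x_i^k/(p_jd^h)\}/\GC$ when both slopes are positive and $\alpha^{down}=-\max\{x_i^h/(p_jd^k),x_j^k/(p_id^h)\}/\GC$ when both are negative, provided a non-null integer step length fits inside the feasibility window --- this is exactly case (i). When $\beta^h$ and $\beta^k$ have opposite signs the two inequalities force $\alpha=0$, so the sole feasible (and trivially non-dominated) allocation is $\mathbf{x}$ itself, which is case (ii).

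The main subtlety is bookkeeping at the endpoints: one must verify that the $\alpha$ values listed in case (i) really correspond to integer displacements of $\mathbf{x}$, which is automatic from (\ref{eq:sijkh}) because $S_{ij}^{kh}$ already absorbs the factor $\GC$. The degenerate situations in which one of $\beta^h,\beta^k$ vanishes are handled the same way: the indifferent agent imposes no restriction, so the strict agent either pushes the outcome to the corresponding feasibility endpoint or, if $\beta^h=\beta^k=0$, leaves the whole segment Pareto equivalent and collapses it to $\mathbf{x}$ up to indifference. Once the slopes are written out, linearity makes each comparison immediate and no further estimation is needed.
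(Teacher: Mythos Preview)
The paper does not supply a separate proof for Corollary~\ref{corol:2}; the justification is folded into the statement itself through the phrases ``It might be easily seen'' and ``due to Proposition~\ref{prop:2}''. Your argument spells out exactly these hints: you compute the directional derivatives $\beta^h,\beta^k$ from the explicit form of $S_{ij}^{kh}$ in (\ref{eq:cons6})--(\ref{eq:sijkh}), observe that the unconstrained argmax of an affine map lies at $\pm\infty$, and thereby recover the dichotomy on $\Gamma$ according to the relative signs. The Pareto analysis via the rationality constraints $\alpha\beta^h\ge 0$, $\alpha\beta^k\ge 0$ is the natural linear specialisation of the unimodality argument in Proposition~\ref{prop:2}, so your route and the paper's intended one coincide.

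One small point of bookkeeping deserves tightening. You note that the endpoint step lengths must give integer allocations and claim this is ``automatic'' because $S_{ij}^{kh}$ absorbs $\GC$. That is the right mechanism for ensuring that \emph{integer} $\alpha$ produce integer moves, but it does not guarantee that the endpoint values $\alpha^{up},\alpha^{down}$ in (\ref{eq:alpharange}) are themselves integers. Strictly speaking, the Pareto-efficient step should be the largest (respectively smallest) integer inside $[\alpha^{down},\alpha^{up}]$, not the raw endpoint. The paper's statement is equally loose on this point, so you are matching its level of precision; but if you wish to be fully rigorous, replace ``equal to $\alpha^{up}$'' by ``equal to $\lfloor\alpha^{up}\rfloor$'' (and analogously with $\lceil\alpha^{down}\rceil$) and note that the hypothesis ``at least one non-null integer in the feasible window'' is precisely what ensures this rounded value is admissible and distinct from the disagreement point.
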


Having a characterization of the Pareto frontier for any ERP in the sequence allows not just a higher efficiency in simulating the process but also the possibility of measuring the number of non dominated endowments of each of the $m(m-1)n(n-1)/4$ ERPs, which might be used as a measure of uncertainty of the process. Indeed, the uncertainty of a barter process of this type might come from different sides: i) how to choose the couple of agents and commodities in each step? ii) which Pareto efficient solution of each ERP to use to update the endowments of the system? In the next subsection we shall study different criteria for answering the first two questions.

Note that the set of non-dominated solutions of the ERP, obtained by the local search movement (\ref{eq:cons6}) might give rise to imbalances between supply and demand, as described by Dreze \cite{Dreze1975} for the continuous case. To resolve this imbalance Dreze introduce a quantity rationing, which can by also extended to the ERP.

Consider a rationing scheme for the ERP as a pair of vectors $l \in \mathbb{Z}^{m}$, $L \in \mathbb{Z}^{m}$, with $L \geq 0 \geq l$, such that the $t^{th}$ and $(t+1)^{th}$ ER verifies $l_{i} \leq \mathbf{x}_{i}^{t+1} - \mathbf{x}_{i}^{t} \leq L_{i}$, for $i = 1, \ldots, n$, where $l_i$ and $L_i$ are the $i^{th}$ components of $l$ and $L$ respectively. Thus, for two given agents $h$ and $k$ and two given commodities $i$ and $j$ we have
\begin{equation}\label{eq:alpha_RationingScheme}
l_i \leq \alpha \GC \left[
\begin{array}{c}
\vdots      \\
p_{j}d^{k}  \\
\vdots      \\
- p_{j}d^{h}\\
\vdots      \\
\end{array}
\right]
\leq L_i, \qquad
l_j \leq \alpha \GC \left[
\begin{array}{c}
\vdots      \\
-p_{i}d^{k}  \\
\vdots      \\
p_{i}d^{h}\\
\vdots      \\
\end{array}
\right]
\leq L_j,
\end{equation}
%which only requires to introduce two additional inequalities to the range of feasible step-length:
%\begin{subequations}\label{eq:alpha_range}
%\begin{equation}
%\D
%  -\frac{L_i}{\GC \min\left\{ p_{j}d^{h} , p_{i}d^{k}\right\}}
%\le \alpha \le
%  -\frac{l_i}{\GC \max\left\{ p_{j}d^{h} , p_{i}d^{k}\right\}},
%\end{equation}
%\begin{equation}
%\D
%  \frac{l_i}{\GC \min\left\{ p_{j}d^{k} , p_{i}d^{h}\right\}}
%\le \alpha \le
%  \frac{L_i}{\GC \max\left\{ p_{j}d^{k} , p_{i}d^{h}\right\}}.
%\end{equation}
%
%\end{subequations}

An open problem which has not been is not investigated by Nasini et al. \cite{NasiniCastroFonseca2014}  is the formulation of equilibrium conditions for this rationing scheme. One possibility might be the construction of two intervals for $l$ and $L$ which minimize the overall imbalances, under the conditions that (\ref{eq:alpha_RationingScheme}) is verified in each ERP, as long as $l$ and $L$ are inside the respected intervals. The integrality of the allocation space $\Lambda$ forbids a straightforward application of the equilibrium criteria proposed by Dreze \cite{Dreze1975} to the markets we are considering in this work.

%-----------------------------------------------------------------------

\subsection{Direction of movement: who exchange what?}

%-----------------------------------------------------------------------

The sequence of elementary reallocations formalized in (\ref{eq:cons5}) requires the iterative choice of couples of agents $(h,k)$
and couples of commodities $(i,j)$, i.e., directions of movement among the $m(m-1)n(n-1)/4$ in the neighborhood of the current solution. If we this choice is based on a welfare function (summarizing the utility functions of all the agents), the selection of of couples of agents and couples of commodities can be made mainly in two different ways: first improving and best improving directions of movement.

The best improving direction requires an exhaustive exploration of the neighborhood. Noting that each direction of movement in the current neighborhood constitutes a particular ERP, a welfare criterion might be the uncertainty of each elementary reallocation, measured by the number of points in the Pareto frontier of ERPs, as described in the previous subsection. A usual welfare criterion is a norm of the objective vector (e.g., Euclidean, $L_1$ or $L_\infty$ norms). Also the average marginal rate of substitution could represent an interesting criterion to select the direction of movement as a high marginal rate of substitution suggests a kind of mismatch between preferences and endowments.

If at iteration $t$ an improving direction exists the respective endowments are updated in accordance with the solution of the selected ERP: for each couple of commodities $(i,j)$ and each couple of agents $(h,k)$, agent $k$ gives $\alpha \GC p_{j}d^k$ units of $i$ to
agent $h$ and in return he/she gets $\alpha\GC p_{i}d^k$ units of $j$, for some $\alpha \in \mathbb{Z}$. At iteration $t+1$, a second couple of commodities and agents is considered in accordance with the defined criterion. If we use a first improving criterion, the process stops when the endowments keep in \emph{status quo} continuously during $m(m-1)n(n-1)/4$ explorations, i.e., when no improving direction is found in the current neighborhood.

\subsection{Linear utilities}

In microeconomic theory the utility functions are rarely linear, however the case of linear objectives appears particularly suitable from an optimization point of view and allows a remarkable reduction of operations, as the ERPs cannot have more than one Pareto-efficient solution (see Corollary \ref{corol:1}).

Consider a given direction of movement $S_{ij}^{kh}$. We know that a feasible step length $\alpha$ belongs to the interval defined by (\ref{eq:alpharange}). Since in the case of one linear objective the gradient is constant, for any direction of movement $(i,j,k,h)$ the best Pareto improvement (if there exists one) must happen in the endpoints of the feasible range of $\alpha$ (let $\alpha^{down}(i,j,k,h)$ and $\alpha^{up}(i,j,k,h)$ denote the left and right endpoints of the feasible range of $\alpha$, when the direction of movement is $(i,j,k,h)$). Therefore, the line search reduces to decide either $\alpha^{down}(i,j,k,h)$, $\alpha^{up}(i,j,k,h)$ or none of them. Then for every given point $\mathbf{x}$, we have a neighborhood of at most $m(m-1)n(n-1)/2$ candidate solutions.

Despite the idea behind the SER is a process among self-interested agents, which are by definition local optimizers, this algorithm could also be applied to any integer linear programming problem  with one linear objective: $\UF(\mathbf{x}) = c\T \mathbf{x}$. In this case however the branch and cut algorithm is much more efficient even for big instances, as we will show in
the next section.

If a first-improve method is applied, an order of commodities and agents is required when exploring the neighborhood and the equilibrium allocation might be highly affected by this order (path-dependence). The pseudocode of algorithm \ref{alg:alg2} describes the first improve search of the barter algorithm applied to the case of one linear objective function.

Note that if the nonnegativity constraints are not taken into account, problem (\ref{eq:fixed_prices_MINOP}) is unbounded for linear utility functions. This corresponds to the fact that without lower bounds the linear version of this problem would make people infinitely get into debt. As a consequence, the only possible stopping criterion, when the objective function is linear, is the fulfillment
of nonnegativity constraints, i.e. a given point $\mathbf{x}$ is a final endowment (an equilibrium of the barter process) if we have that for any direction of movement and for any given integer $\alpha$ if $c\T (\mathbf{x} + \alpha S_{ij}^{kh}) > c\T \mathbf{x}$ then $\mathbf{x} + \alpha S_{ij}^{kh}$ has some negative component. In some sense the optimality condition is now only
based on feasibility.

\begin{algorithm}
\begin{algorithmic}[1]
{ \small
\STATE Initialize the endowments $E = {<\mathbf{e}^1,\ldots,\mathbf{e}^n>}$ and utilities $U = {<u^1,\ldots,u^n>}$.
\STATE Let $t = 0$;
\STATE Let $(i,j,k,h)$ be the $t^{th}$ direction in the order set of directions;
\IF{$c\T({x} + \alpha^{down}(i,j,k,h)S_{ij}^{kh}) > c\T({x} + \alpha^{up}(i,j,k,h)S_{ij}^{kh})$ and $c\T({x} + \alpha^{down}(i,j,k,h)S_{ij}^{kh}) > c\T({x})$}
    \STATE Update the incumbent ${x} = {x} + \alpha^{down}(i,j,k,h)S_{ij}^{kh}$ and GOTO 3;
\ELSIF {$c\T({x} + \alpha^{up}(i,j,k,h)S_{ij}^{kh}) > c\T({x} + \alpha^{down}(i,j,k,h)S_{ij}^{kh})$ and $c\T({x} + \alpha^{up}(i,j,k,h)S_{ij}^{kh}) > c\T({x})$}
    \STATE Update the incumbent ${x} = {x} + \alpha^{up}(i,j,k,h)S_{ij}^{kh})$ and GOTO 3;
\ELSE
    \STATE $t = t+1$;
    \IF {$t < m(m-1)n(n-1)$}
        \STATE GOTO 4;
    \ELSE
    \STATE RETURN
\ENDIF
\ENDIF}
\end{algorithmic}
\caption{\label{alg:alg2}  First-improve SER with linear utility function}
\end{algorithm}

%---------------------------------------------------------------------------------------------------------------------------------

\subsection{The final allocation and the convergence of the SER}
\label{Subsection:6.2.1}

%---------------------------------------------------------------------------------------------------------------------------

For the case of a continuous commodity space and exogenous prices, pairwise
optimality implies global optimality, as long as all agents are initially endowed
with some positive amount of a commodity \cite{Feldman1973}. Unfortunately, the
SER described in this paper does not necessarily lead to Pareto efficient
endowments. Let $ T_{\mathbf{x}}(\alpha) = \mathbf{x} + \sum_{k\neq h}
\sum_{i\neq j} \alpha(i,j,k,h)S^{kh}_{ij}$, representing a simultaneous
reallocation of $m$ commodities among $n$ agents, with step length
$\alpha_{ij}^{kh}$ for each couple of commodities $ij$ and agents $hk$, starting
from $\mathbf{x} \in \Lambda$. Whereas a SER is required to keep feasibility
along the process, a simultaneous reallocation $ T_{\mathbf{x}}(\alpha)$ of $m$
commodities among $n$ agents does not consider the particular path and any
feasibility condition on the paths leading from $\mathbf{x}$ to $T_{\mathbf{x}}(\alpha)$.
Hence, remembering that all SERs described in this section stop when no improving
elementary reallocation exists in the current neighborhood, we can conclude
that the non existence of a feasible improving ER does not entail the non
existence of an improving simultaneous reallocation of $m$ commodities among
$n$ agents. In this sense a SER provides a lower bound of any sequence of
reallocations of more than two commodities and two agents at a time.

Consider the Lyapunov function $U(t)=\sum_{i=1}^{n}\UF^{i}(\mathbf{x}(t))$,
associating a real value to each point in the allocation space \cite{Uzawa1962}.
As $U(t)$ increases monotonically along the SER (\ref{eq:cons6}) and the
allocation space is a finite set, then $\lim_{t\rightarrow\infty}U(t)=U^{*}$.

Some understanding of the evolution of $U(t)$ along the SER iteration can
be provided.

\begin{proposition}
Consider a SER with $m$ commodities among $n$ agents with linear utility functions, i.e. $\UF^h = \mathbf{c}^h \mathbf{x}(t)$, where $c^h_i \leq 1$ (the utility functions can be rescaled by a common constant without affecting the SER). The change in the Lyapunov function from iteration $t-1$ to iteration $t$ is bounded from above by
\begin{equation}
 U(t) - U(t-1) \leq \frac{q_{\max}}{p_{\min}}\frac{d^{\max}}{d^{\min}},
\end{equation}
where $d^{\max}$ and $d^{\max}$ are the minimum and maximum elements of $d^{i} \in \mathbb{Q}$, for $i = 1 \ldots n$, as defined in (\ref{eq:fixed_prices_MINOP}); $p_{\min}$ is the minimum price and $q_{\max} = \max\{\sum_h q^h_j ~:~ j = 1 \ldots m \}$.
\end{proposition}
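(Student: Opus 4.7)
The plan is to exploit the fact that a single elementary reallocation only modifies the holdings of the two agents $h$ and $k$ involved, and only in the two commodities $i$ and $j$. Thus
$$U(t) - U(t-1) = \Delta u^h + \Delta u^k, \qquad \Delta u^l = \mathbf{c}^l\cdot(\mathbf{x}^l(t)-\mathbf{x}^l(t-1)) \mbox{ for } l\in\{h,k\}.$$
Substituting the explicit form of $S_{ij}^{kh}$ from (\ref{eq:cons6}) and (\ref{eq:sijkh}), the two summands become
$$\Delta u^h = \alpha\, \GC\, d^k\,(c^h_i p_j - c^h_j p_i), \qquad \Delta u^k = \alpha\, \GC\, d^h\,(c^k_j p_i - c^k_i p_j).$$
Under the rescaling assumption $c^l_a \le 1$, each parenthesised factor has magnitude at most $p_i+p_j$, so
$$|U(t)-U(t-1)| \le |\alpha \GC|\,(d^h+d^k)(p_i+p_j).$$

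The second step is to bound $|\alpha \GC|$ using the feasibility range (\ref{eq:qrange}): any feasible step satisfies
$$|\alpha \GC|\le \max\{x^h_i/(p_jd^k),\,x^k_j/(p_i d^h),\,x^h_j/(p_i d^k),\,x^k_i/(p_j d^h)\}.$$
I would pick the term that pairs cleanly with the numerator $(d^h+d^k)(p_i+p_j)$---for instance $x^k_i(t-1)/(p_j d^h)$ when $\alpha>0$---so that the factors $d^h$ and $p_j$ cancel, leaving a bound linear in the endowment $x^k_i(t-1)$.

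The final ingredient is the conservation law implied by the linking constraints of (\ref{eq:cons3}), namely $\sum_l d^l x^l_a(t)=\sum_l d^l q^l_a$ for every iteration $t$ and commodity $a$. Combined with $x^l_a(t)\ge 0$ this yields $x^l_a(t)\le d^{\max} q_{\max}/d^l$ (using $\sum_l q^l_a \le q_{\max}$) for every $t$. Substituting this into the preceding bound and invoking $d^l\ge d^{\min}$ and $p_j\ge p_{\min}$ yields the claimed inequality
$$U(t)-U(t-1)\le \frac{q_{\max}}{p_{\min}}\cdot\frac{d^{\max}}{d^{\min}}.$$

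\textbf{Main obstacle.} The delicate point is the fourfold choice in (\ref{eq:qrange}): a careless upper bound for $|\alpha\GC|$ leaves behind factors of the form $(p_{\max}/p_{\min})(d^{\max}/d^{\min})^2$ rather than the tight form of the proposition. One must pair the $d^h$ and $p_j$ coming from the direction $S_{ij}^{kh}$ with the matching denominator $p_j d^h$ of $x^k_i/(p_j d^h)$, so that only the single ratio $d^{\max}/(d^{\min}p_{\min})$ survives after cancellation. The case $\alpha<0$ is handled symmetrically by selecting a different term in (\ref{eq:qrange}) and invoking the nonnegativity of $x^h_i$ or $x^k_j$ instead.
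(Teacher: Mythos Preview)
Your overall strategy---write $U(t)-U(t-1)=\Delta u^h+\Delta u^k$, substitute the explicit direction $S_{ij}^{kh}$, bound $|\alpha\,\GC|$ through the feasibility interval, and then invoke commodity conservation to control the current holdings---is the same skeleton the paper uses. Your conservation estimate $x^l_a(t)\le d^{\max}q_{\max}/d^l$ is in fact sharper than what the paper writes down (the paper simply passes to the case $d^h=1$ for all $h$ at the end, where $x^h_i(t)\le q_{\max}$ is immediate).

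The gap is in your step~3. Once you collapse the two agents' contributions into the single product
\[
|U(t)-U(t-1)|\le |\alpha\,\GC|\,(d^h+d^k)(p_i+p_j),
\]
the ``pairing'' you describe cannot recover the stated bound: dividing by $p_j d^h$ gives
\[
\frac{(d^h+d^k)(p_i+p_j)}{p_j d^h}=\Bigl(1+\tfrac{d^k}{d^h}\Bigr)\Bigl(1+\tfrac{p_i}{p_j}\Bigr),
\]
so the factors $d^h$ and $p_j$ do \emph{not} cancel; an extra price ratio $p_i/p_j$ and a doubling survive. After inserting $x^k_i\le d^{\max}q_{\max}/d^k$ you are left with a bound of order $q_{\max}\,(d^{\max}/d^{\min})^2\,p_{\max}/p_{\min}^2$, which is precisely the overshoot you flagged in your ``Main obstacle'' paragraph but did not actually eliminate.

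The paper proceeds differently in two respects. First, instead of merely bounding $|\alpha\,\GC|$ by one term of (\ref{eq:qrange}), it invokes Corollary~\ref{corol:2} to \emph{identify} $\alpha\,\GC$ with the relevant endpoint, e.g.\ $\min\{x^h_j/(p_id^k),\,x^k_i/(p_jd^h)\}$ when both linear increments are positive; this makes $\alpha\,\GC$ simultaneously dominated by \emph{both} quotients, so the $d^k$ in $\Delta u^h$ can be cancelled against $x^h_j/(p_id^k)$ while the $d^h$ in $\Delta u^k$ is cancelled against $x^k_i/(p_jd^h)$. Second, the paper rescales prices (numeraire) so that $p_j\le 1$, which is how the residual price factor is absorbed. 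The moral is that you must keep $\Delta u^h$ and $\Delta u^k$ separate and pair each with its own endpoint of the $\min$, rather than first passing to the cruder product $(d^h+d^k)(p_i+p_j)$.
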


\begin{proof}
Let $(k,h,i,j)$ be the direction of movement selected at iteration $t$ of the SER, $\mathbf{x}(t)$ the corresponding allocation and $\delta_t = U(t) - U(t-1)$ be the change in the Lyapunov function from iteration $t-1$ to iteration $t$. In the general case we have
\begin{equation}
\delta_t = \UF^{h}(\mathbf{x}(t) + \alpha S_{ij}^{kh}) + \UF^{k}(\mathbf{x}(t) + \alpha S_{ij}^{kh}) - \UF^{h}(\mathbf{x}(t)) - \UF^{k}(\mathbf{x}(t)),
\end{equation}
which the case of linear utility functions (i.e. $\UF^h = \mathbf{c}^h \mathbf{x}(t)$)
becomes
\begin{equation}
\delta_t =  \alpha \GC \left(\left[
\begin{array}{r}
c^h_i\\
c^h_i\\
\end{array}
\right]\T \left[
\begin{array}{r}
p_{j}d^{k}\\
-p_{i}d^{k}\\
\end{array}
\right] + \left[
\begin{array}{r}
c^h_i\\
c^h_i\\
\end{array}
\right]\T \left[
\begin{array}{r}
-p_{j}d^{h}\\
p_{i}d^{h}\\
\end{array}
\right] \right),
\end{equation}
in accordance with (\ref{eq:alpharange}). Based on Corollary \ref{corol:2}, we have
\begin{subequations}
\begin{equation}
\delta_t =  - \max\left\{\displaystyle \frac{x_{i}^{h}(t)}{p_{j}d^{k}} , \frac{x_{j}^{k}(t)}{p_{i}d^{h}}\right\}\left(\left[
\begin{array}{r}
c^h_i\\
c^h_i\\
\end{array}
\right]\T \left[
\begin{array}{r}
p_{j}d^{k}\\
-p_{i}d^{k}\\
\end{array}
\right] + \left[
\begin{array}{r}
c^h_i\\
c^h_i\\
\end{array}
\right]\T \left[
\begin{array}{r}
-p_{j}d^{h}\\
p_{i}d^{h}\\
\end{array}
\right] \right),
\end{equation}
if $(c_{i}^{h}p_{j}d^{k}$ - $c_{j}^{h}p_{i}d^{k})$ and $(c_{j}^{k}p_{i}d^{h}$ - $c_{i}^{k}p_{j}d^{h})$ are negative.
\begin{equation}
\delta_t = \min\left\{ \displaystyle \frac{x_{j}^{h}(t)}{p_{i}d^{k}} , \frac{x_{i}^{k}(t)}{p_{j}d^{h}}\right\} \left(\left[
\begin{array}{r}
c^h_i\\
c^h_i\\
\end{array}
\right]\T \left[
\begin{array}{r}
p_{j}d^{k}\\
-p_{i}d^{k}\\
\end{array}
\right] + \left[
\begin{array}{r}
c^h_i\\
c^h_i\\
\end{array}
\right]\T \left[
\begin{array}{r}
-p_{j}d^{h}\\
p_{i}d^{h}\\
\end{array}
\right] \right),
\end{equation}
if $(c_{i}^{h}p_{j}d^{k}$ - $c_{j}^{h}p_{i}d^{k})$ and $(c_{j}^{k}p_{i}d^{h}$ - $c_{i}^{k}p_{j}d^{h})$ are positive.
\end{subequations}
Without lose of generality, let $p_j \leq 1$ (prices can be rescaled by choosing one commodity as a \emph{numeraire}). Then, in the economically meaningful case of having $d^h = 1$, for $h = 1 \ldots n$, we have
\begin{equation}
\delta_t \leq = \frac{q_{\max}}{p_{\min}}\frac{d^{\max}}{d^{\min}}
\end{equation}
since $q_{\max} \geq x_{i}^{h}(t)$, for all $h = 1 \ldots n$ and $i = 1 \ldots n$.
\end{proof}

In the economically meaningful case of $d^h = 1$, for all $h = 1 \ldots n$, the immediate economical interpretation of this result is that a high rage of variation of prices might result in big changes of the aggregated utility, from one bilateral exchange to another. The effect of the variability of prices on the computational performance of the SER will be studied in Subsection \ref{Subsection:6.3.2}.

\section{Bartering on networks}\

An important extension of the problem of bargaining integer amounts of $m$
commodities among $n$ agents with fixed prices is to define a network
structure such that trades among agents are allowed only for some couples of
agents who are linked in this network. In this case the conservation of
commodities $d^{1} \mathbf{x}^{1} + d^{2} \mathbf{x}^{2} + \dots + d^{n}
\mathbf{x}^{n} = d^{1} \mathbf{e}^{1} + d^{2} \mathbf{e}^{2} + \dots + d^{n}
\mathbf{e}^{n}$ is replaced by balance equations on a network, so that the
final allocation of commodity $i$ must verify ${A}\mathbf{y}_{i} =
D(\mathbf{x}_{i} - \mathbf{e}_{i})$, where $\mathbf{y}_{i}$ is the flow of
commodity $i$ in the system, ${A}$ is the incidence matrix, and $D$ is
a $n \times n$ diagonal matrix containing the weights of the conservation of
commodity $i$, that is $D =\hbox{diag}(d^{1} \dots d^{n})$ (for more details on
network flows problems see \cite{Auja1991}).

It is also possible for the final allocation to have a given maximum capacity,
that is, an upper bound of the amount of commodity $i$ that agent $h$ may
hold: $x_{i}^{h} \leq \bar{x}_{i}^{h}$.

The variables of the problem are now $x_{i}^{h}$, which again represent the
amount of commodity $i$ hold by agent $h$, $s_{i}^{h}$ which are the slack
variables for the upper bounds, and $y_{i}^{h,k}$ which are the flow of
commodity $i$ from agent $h$ to agent $k$.

The objective functions $\tilde\UF^{i}(\mathbf{x},\mathbf{y}) \, , i = 1 \ldots n$, might depend
on both the final allocation $\mathbf{x}$ and the interactions $\mathbf{y}$, since the network
topology could represent a structure of geographical proximity and
reachability.

The resulting mathematical programming formulation of the
problem of bargaining integer commodities with fixed prices among agents on a
network with upper bounds on the final allocations is as follows:

\begin{subequations}
\label{eq:cons7}
\begin{equation}
\hspace{-7cm} \max \quad [\tilde\UF^{i}(\mathbf{x},\mathbf{y}), i = 1, \ldots, n]\\
\end{equation}
\hspace{2cm} s. to\\
\begin{equation}
\left[
\begin{array}{lll|lll|lll}
P      &            &    \\
       &  \ddots    &    \\
       &            & P  \\
          \hline
\id    &            &          & \id  &         &     \\
       &  \ddots    &          &      & \ddots  &     \\
       &            & \id      &      &         & \id \\
 \hline
       &            &          &      &         &      &   A  &         &      \\
       & \mathbb{D} &          &      &         &      &      & \ddots  &      \\
       &            &          &      &         &      &      &         &   A  \\
\end{array}
\right]
\left[
\begin{array}{c}
\mathbf{x}\\
\mathbf{s}\\
\mathbf{y}\\
\end{array}
\right]
=
\left[
\begin{array}{c}
b^{1}\\
\vdots\\
b^{n}\\
\bar{x}^{1}\\
\vdots\\
\bar{x}^{n}\\
\mathbf{b}^{0}\\
\end{array}
\right]\\
\end{equation}
\begin{equation}
\begin{array}{l}
\\
\hspace{-3.2cm} \UF^{i}(\mathbf{x},\mathbf{y}) \geq \UF^{i}(\mathbf{e},\mathbf{0}) \quad i = 1 \ldots, n \\
\\
\hspace{-3.2cm} \mathbf{x} \in \mathbb{Z}^{mn} \geq 0, \qquad \mathbf{y} \in \mathbb{Z}^{mn(n-1)} \geq 0,\\\\
\end{array}
\end{equation}
\end{subequations}
where $\tilde\UF^i:\rset^{mn}\to \rset$, $P \in \mathbb{Q}^{1 \times m}$, $\mathbb{D} \in \mathbb{Q}^{mn \times mn} $, $ b^i \in
\mathbb{Q}$, $i=1,\dots,n$, $A \in \mathbb{Q}^{n \times n(n-1)}$, and $\mathbf{b}^0 \in \mathbb{Q}^{nm}$. Matrix $\mathbb{D} $ is an appropriate permutation of the diagonal matrix made of $m$ copies of the matrix $D$ with the weights of the conservation of commodity and $\tilde\UF^{i}(\mathbf{e},\mathbf{0})$ is the utility function of agent $i$ evaluated in the initial endowments $\mathbf e$ with null flow.

Problem (\ref{eq:fixed_prices_MINOP}) had $mn$ variables and $m+n$ constraints, whereas problem (\ref{eq:cons7}) has $mn(n + 1)$ variables and $n(1 + 2m)$ constraints. When a SER is applied, the definition of a network structure and the application of upper bounds to the final allocation reduce the number of feasible directions of movement in each iteration and the bound of the interval of feasible step length, as for any incumbent allocation $\mathbf   x$, the step length $\alpha$ must be such that $0 \leq \mathbf{x} + \alpha S_{ij}^{kh}
\leq \mathbf{\widetilde{x}}$.

The effect of network structures on the performance of a barter process has been previously studied by Bell \cite{Bell1998} and by Wilhite \cite{Wilhite2001}, for the case of endogenous prices and continuous commodity space. In this case the process takes into account how agents update prices each time they perform a bilateral trade. Reasonably, prices should be updated based either on the current state of the only two interacting agents or on the state of the overall population or also on the history of the system, such as previous prices. Bell showed that centralized network structures, such as a stars, exhibit a faster convergence to an equilibrium allocation.

It must be noted that any sequence of bilateral trades intrinsically gives rise to a network structure generated by the set of couples of agents interacting along the process. Such a structure might be statistically analyzed in term of its topological properties, as it is done in the next section with a battery of problems of different sizes.

%---------------------------------------------------------------------------------------------------------------------------------

\section{Applications in computational economics}\
\label{Section:6.3}

%---------------------------------------------------------------------------------------------------------------------------------

The aim of this section is to provide an inclusive application in the field of computational economics of the mathematical programming based models and methods proposed thus far. All the data sets used to replicates the results illustrated in this section can be downloaded from
\small
\begin{quotation}
\hspace{-6mm} \url{https://www.dropbox.com/sh/qekoyisyz1bzeej/AACHor8HbYU_KbYopTPxTjzca?dl=0},
\end{quotation}
\normalsize
along with a Java code implementing the previously described SERs. The reader could also modify the codes and independently use the same data to run his modified code and check his hypothesis about social bartering.

%---------------------------------------------------------------------------------------------------------------------------------

\subsection{Numerical comparison between the simultaneous reallocation and the SERs}\
\label{Subsection:6.3.1}

%---------------------------------------------------------------------------------------------------------------------------------

We first consider the number of ERs required to equilibrate the system and study their relationship with the size of the problem. In fact a numerical comparison with a global solver, such as the branch and cut, is provided to evaluate the efficiency of a decentralized barter economy in comparison with the action of a centralized global planner.

We have already seen that a SER can also be applied to any integer linear programming problem of the form (\ref{eq:fixed_prices_MINOP}), where the individual utilities are aggregated in a single welfare function. If this aggregated welfare is defined as a linear function of the endowments of the form $\UF(\mathbf x) = c\T {\mathbf x}$, the comparison of the SERs with the standard branch and cut algorithm is easily carried out.

Considering the ERP as the  basic operation of a SER and the simplex iteration as the basic operation of the branch and cut algorithm, the comparison between the two methods is numerically shown in Table \ref{tab:SER_BranchCut} for three replications of 11 problems with the same number of agents and commodities, which amounts to 33 instances. The branch and cut implementation of the state-of-the-art optimization solver Cplex was used.

\begin{table}[H]
\begin{center}
\scalebox{0.82}{
\begin{tabular}{r|r|rrr|rr|rr}
  \hline
\multirow{2}{*}{size} & \multirow{2}{*}{initial welfare} & \multicolumn{3}{|c|}{first-improve} & \multicolumn{2}{|c|}{best-improve} & \multicolumn{2}{|c}{branch and cut}\\
%\hhline{~~-------}
      &           & neighborhood &  ERPs & solution  & ERPs & solution & simplex & solution \\
  \hline
   10 &    75.134 &      0.66    &    267 &    353.269 &    91 &    365.126 &    87 &   394.630 \\
   10 &   147.958 &      0.84    &    271 &    763.188 &    91 &    767.371 &    12 &   769.861 \\
   10 & 1.205.972 &      0.77    &    375 &  3.925.921 &    74 &  3.844.165 &    70 & 4.060.685 \\

   15 &   297.713 &      0.70    &  1.343 &  1.455.839 &   215 &  1.471.387 &    49 & 1.488.149 \\
   15 &   326.996 &      0.71    &  1.090 &  2.544.271 &   237 &  2.554.755 &    63 & 2.614.435 \\
   15 &   625.800 &      0.71    &    806 &  2.640.317 &   224 &  2.644.008 &    76 & 2.684.016 \\

   20 &   183.573 &      0.67    &  2.759 &  3.432.832 &   378 &  3.425.665 &   110 & 3.525.421 \\
   20 & 1.064.023 &      0.81    &  1.582 &  4.197.757 &   361 &  4.194.187 &    94 & 4.331.940 \\
   20 &   201.377 &      0.78    &  2.629 &  1.017.906 &   351 &  1.089.860 &    80 & 1.180.977 \\

   25 &   228.365 &      0.89    &  4.358 &  2.221.790 &   648 &  2.226.152 &   237 & 2.271.552 \\
   25 &   687.492 &      0.65    &  2.806 &  3.416.982 &   572 &  3.403.937 &   113 & 3.462.043 \\
   25 &   323.495 &      0.61    &  4.706 &  2.262.657 &   666 &  2.245.817 &    50 & 2.474.429 \\

   30 &   973.955 &      0.79    &  6.648 &  5.428.473 &   975 &  5.427.207 &   101 & 5.377.843 \\
   30 & 1.811.905 &      0.82    & 13.126 &  8.945.605 & 1.084 &  8.953.611 &   127 & 9.080.651 \\
   30 & 1.302.404 &      0.85    & 12.089 &  7.583.841 &   957 &  7.573.400 &   132 & 7.605.525 \\

   35 &   653.739 &      0.87    & 13.201 &  3.456.918 & 1.310 &  3.458.570 &   112 & 3.474.126 \\
   35 &   564.905 &      0.80    &  8.772 &  3.579.713 & 1.308 &  3.585.815 &    77 & 3.599.639 \\
   35 &   753.056 &      0.83    & 14.199 &  5.132.226 & 1.290 &  5.107.933 &    67 & 5.333.123 \\

   40 &   482.570 &      0.87    & 16.307 &  2.429.707 & 1.608 &  2.428.731 &   145 &  2.446.953 \\
   40 &   430.174 &      0.68    &  7.885 &  5.281.060 & 1.640 &  5.229.740 &    90 &  5.279.631 \\
   40 & 2.795.862 &      0.79    & 14.240 & 19.175.278 & 1.578 & 14.503.963 &   186 & 19.276.444 \\

   45 & 3.392.010 &      0.98    & 62.398 & 22.681.229 & 2.300 & 22.664.443 &   162 & 22.728.195  \\
   45 &   842.645 &      0.92    & 12.900 &  6.606.875 & 2.137 &  6.642.397 &   204 &  6.755.016  \\
   45 & 1.909.859 &      0.97    & 48.688 & 15.979.841 & 2.173 & 15.865.744 &   180 & 16.071.407  \\

   50 &   839.559 &      0.93    & 20.615 & 4.822.082 & 2.105 &  4.859.830 &   137 &  4.895.655  \\
   50 &   718.282 &      0.97    & 20.744 & 3.586.560 & 2.459 &  3.588.633 &   160 &  3.610.194  \\
   50 & 1.570.652 &      0.99    & 58.165 & 18.872.864 & 2.530 & 19.018.519 &   180 & 19.069.868 \\

   55 &   351.051 &      0.98    & 20.344 & 2.761.203 & 2.935 &  2.748.862 & 1.242 & 2.799.187  \\
   55 &   413.656 &      0.96    & 26.780 & 4.566.394 & 2.922 &  4.569.975 &   336 & 4.585.475  \\
   55 &   551.355 &      0.99    & 32.053 & 5.136.295 & 3.139 &  5.135.647 &   253 & 5.157.444  \\

   60 &   468.575 &      0.99    & 27.208 & 1.941.409 & 3.568 &  1.949.786 &   271 &  1.995.930 \\
   60 &   501.366 &      0.99    & 34.323 & 5.051.429 & 3.521 &  5.051.836 &   313 &  5.067.154 \\
   60 &   575.950 &      0.98    & 43.227 & 4.751.072 & 3.589 &  4.747.097 &   273 &  4.801.179 \\
   \hline
\end{tabular}}
\caption{ \label{tab:SER_BranchCut}
\footnotesize {Numerical results of the SER and Branch and Cut for different instances of problem (\ref{eq:fixed_prices_MINOP}). The first column shows the number of agents and commodities of the problem. Columns 'ERPs' provide the number of elementary reallocations and column 'neighborhood' shows the proportion of neighborhood which has been explored. Columns 'solution' give the maximum total utility found. Column 'simplex' gives the number of simplex iterations performed by branch and cut.}}
\end{center}
\end{table}

The numerical results in Table \ref{tab:SER_BranchCut} shows $33$ problems where the number of agents and commodities is the same, as reported in the first column. For each of the $11$ different sizes $3$ replicates are computed.

The second column of Table \ref{tab:SER_BranchCut} shows the initial levels of social welfare, $c\T {\mathbf e}$. Columns \emph{solution} give the maximum utility found for the three respective methods (first-improve local search, best-improve local search, branch and cut algorithms).

The first-improve local search results in a reduced neighborhood explorations along the sequence of movements, as suggested by the values in the column \emph{neighborhood}, which show the proportion of possible combination of agents and commodities explored before moving to an improving direction (in comparison to the whole $m(m-1)n(n-1)/4$ candidate solutions).

The fourth and fifth columns of Tab. \ref{tab:SER_BranchCut}, named 'ERP', reports the number of movements, i.e. the number of ERPs for which the step-length $\alpha$ (as defined in (\ref{eq:alpharange})) has been non-null. The first-improve local search gives rise to a higher amount of ERPs, in comparison with the best-improve version. In addition, in most of the cases the best-improve search results in better allocations, as their value appear particularly close to the optimal solution (see the seventh and ninth columns of Tab. \ref{tab:SER_BranchCut}).

On the other hand, when competing with the simultaneous reallocation of all $M$ commodities among the $N$ agents, the sequence of best-improve elementary reallocations fails to reach comparatively good results in terms of number of elementary operations performed and goodness of the achieved final allocation.

The scatter plots in figures \ref{fig:Scatter1_lin} and \ref{fig:Scatter2_lin} show the relationship between the problem size (number of agents and commodities) and the elementary operations required for convergence (the ERPs for the best-improve SER and simplex iteration for the branch and cut), with the least square interpolation of algebraical curves and $R^2$ coefficient of determination.

\begin{figure}[H]
\centering
\scalebox{0.95}{
\begin{tabular}{cl}
& \\
& ERPs $ = \beta_0 + \beta_1$(size)  \\
& \\
& $\beta_0 = -946.2, \quad  \beta_1 = 69.5$ \\
& \\
& $R^2 = 0.270$ \\
& \\
& \\
\end{tabular}} \hspace{1cm}
\begin{minipage}[c]{0.53\textwidth}%
\centering
    \includegraphics[height=50mm, width=85mm]{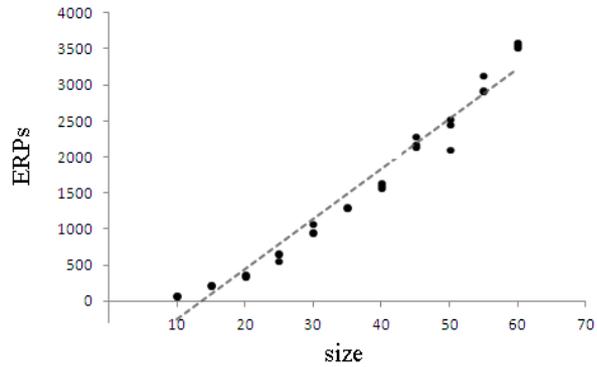}
\end{minipage}
\caption{\label{fig:Scatter1_lin} \footnotesize Scatter plot and least square approximation of a straight line through the relationship between the problem size and the number of ERPs for the best-improve SER method.}
\end{figure}

\begin{figure}[H]
\centering
\scalebox{0.95}{
\begin{tabular}{cl}
& \\
& simplex $ = \beta_0 + \beta_1$(size)  \\
& \\
& $\beta_0 = -63.4, \quad  \beta_1 =6.76$ \\
& \\
& $R^2 = 0.270$ \\
& \\
& \\
\end{tabular}} \hspace{1.8cm}
\begin{minipage}[c]{0.53\textwidth}%
\centering
    \includegraphics[height=50mm, width=85mm]{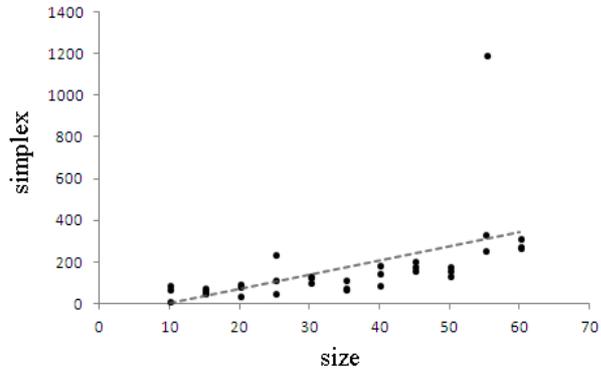}
\end{minipage}
\caption{\label{fig:Scatter2_lin} \footnotesize Scatter plot and least square approximation of a straight line through the relationship between the problem size and the number of simplex iteration for the branch and cut. }
\end{figure}

The scatter plots and least square approximation in Fig. \ref{fig:Scatter1_exp} and \ref{fig:Scatter2_exp} tries to explain the relationship between the problem size and the number of elementary operations (ERPs for the best-improve SER and the simplex pivots for the branch and cut) by an exponential curve of the form $y = \beta_0\exp(\beta_1 x)$, with the corresponding $R^2$ coefficient of determination. The same kind of plots are shown in Fig. \ref{fig:Scatter1_pow} and \ref{fig:Scatter2_pow} for the least square interpolation of a polynomial curve of the form $y = \beta_0 x^{\beta_1}$.

\begin{figure}[H]
\centering
\scalebox{0.95}{
\begin{tabular}{cl}
& \\
& ERPs $ = \beta_0\exp(\beta_1 $size)  \\
& \\
& $\beta_0 = 85.8, \quad  \beta_1 = 0.068$ \\
& \\
& $R^2 = 0.919$ \\
& \\
& \\
\end{tabular}} \hspace{1cm}
\begin{minipage}[c]{0.53\textwidth}%
\centering
    \includegraphics[height=50mm, width=85mm]{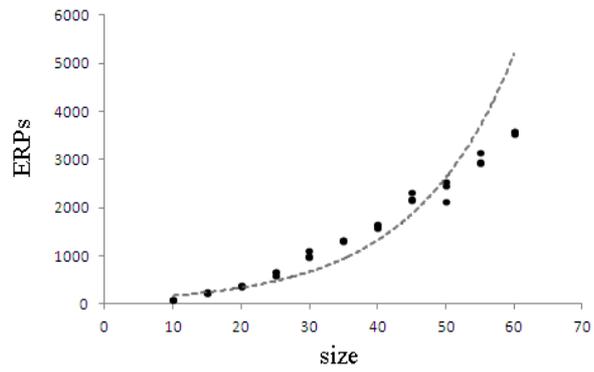}
\end{minipage}
\caption{\label{fig:Scatter1_exp} \footnotesize Scatter plot and least square approximation of an exponential curve through the relationship between the problem size and the number of ERPs for the best-improve SER method. }
\end{figure}

\begin{figure}[H]
\centering
\scalebox{0.95}{
\begin{tabular}{cl}
& \\
& simplex $ = \beta_0\exp(\beta_1 $size)   \\
& \\
& $\beta_0 = 31.4, \quad  \beta_1 = 0.039$ \\
& \\
& $R^2 = 0.602$ \\
& \\
& \\
\end{tabular}} \hspace{1cm}
\begin{minipage}[c]{0.53\textwidth}%
\centering
    \includegraphics[height=50mm, width=85mm]{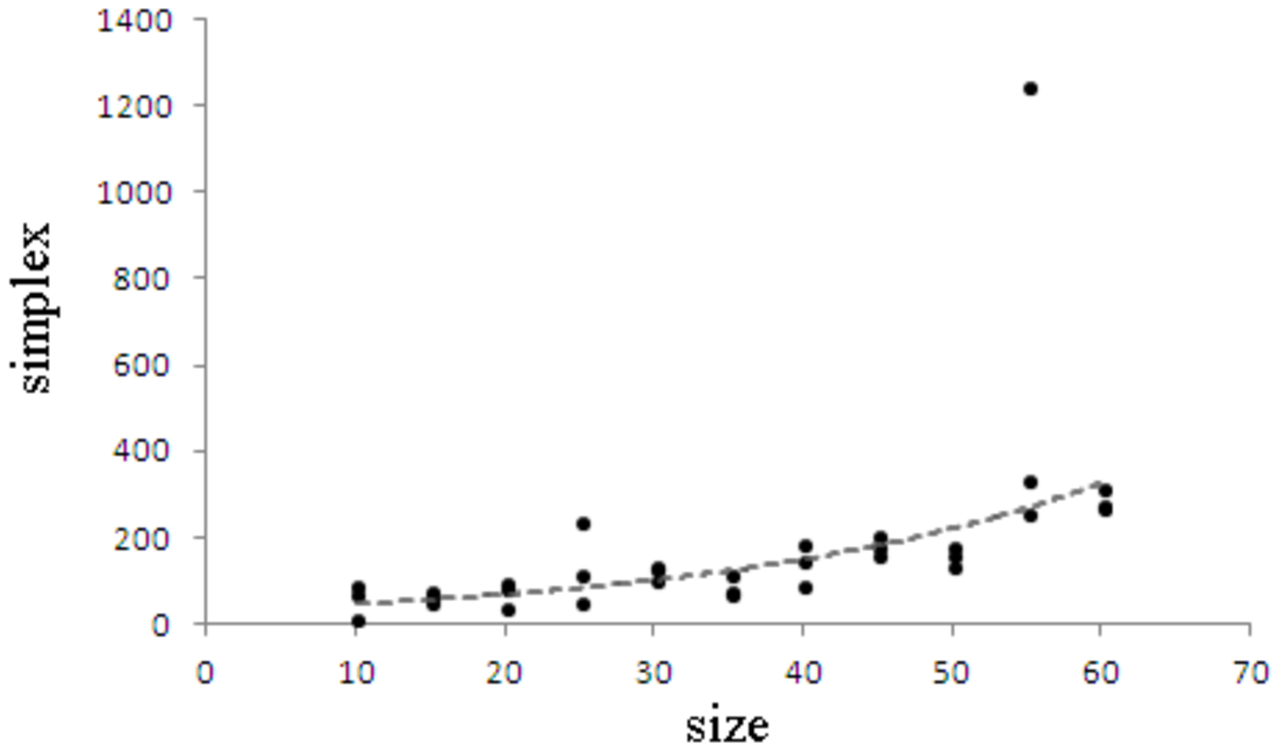}
\end{minipage}
\caption{\label{fig:Scatter2_exp} \footnotesize Scatter plot and least square approximation of an exponential curve through the relationship between the problem size and the number of simplex iteration for the branch and cut.}
\end{figure}

\begin{figure}[H]
\centering
\scalebox{0.95}{
\begin{tabular}{cl}
& \\
& ERPs $ = \beta_0$(size$)^{\beta_1}$  \\
& \\
& $\beta_0 = 0.79, \quad  \beta_1 = 2.07$ \\
& \\
& $R^2 = 0:995$ \\
& \\
& \\
\end{tabular}} \hspace{1.4cm}
\begin{minipage}[c]{0.53\textwidth}%
\centering
    \includegraphics[height=50mm, width=85mm]{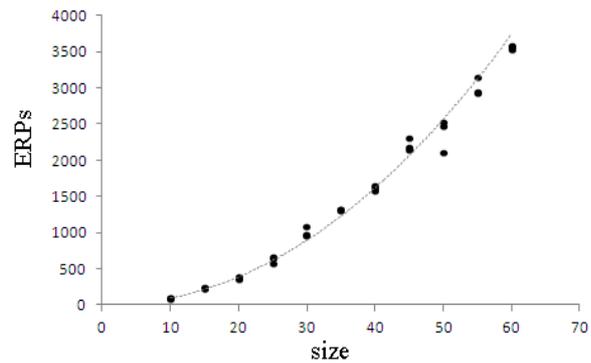}
\end{minipage}
\caption{\label{fig:Scatter1_pow} \footnotesize Scatter plot and least square approximation of a polynomial curve of the form $y = \beta_0x^{\beta_1}$ through the relationship between the problem size and the number of ERPs for the best-improve SER method. }
\end{figure}

\begin{figure}[H]
\centering
\scalebox{0.95}{
\begin{tabular}{cl}
& \\
& simplex $ = \beta_0$(size$)^{\beta_1}$  \\
& \\
& $\beta_0 = 2.84, \quad  \beta_1 = 1.100$ \\
& \\
& $R^2 = 0.567$ \\
& \\
& \\
\end{tabular}} \hspace{1.2cm}
\begin{minipage}[c]{0.53\textwidth}%
\centering
    \includegraphics[height=50mm, width=85mm]{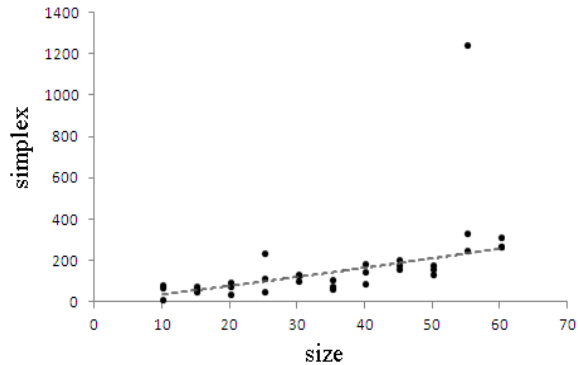}
\end{minipage}
\caption{\label{fig:Scatter2_pow} \footnotesize Scatter plot and least square approximation of a polynomial curve through the relationship between the problem size and the number of simplex iteration for the branch and cut.}
\end{figure}

This results quite clearly suggest a quadratic growth of the expected ERPs with respect to the size of the problem, in accordance with the a coefficient of determination of $0.995$. Instead the number simplex iteration of the branch and cut algorithm seems not to be well fitted by any of the proposed curves.

From the same computational view, other sequences of reallocation have been studied by Bell \cite{Bell1998}, who analyzed the performance of the process under a variety of network structures restricting the interactions to be performed only among adjacent agents. She studied a population of Cobb Douglas' agents trading continuous amount of two commodities with local Walrasian prices and focused on the speed of convergence to an equilibrium price and allocation, observing that more centralized networks converge with fewer trades and have less residual price variation than more diverse networks.

Bell relied only on the number of trades as a measure of the speed of convergence, which we regarded as movements in the local search formalizing the process. Instead, ten years ago Wilhite \cite{Wilhite2001} also toke into account the cost imposed by searching and negotiating, which we regarded as the exploration of the neighborhood\footnote{Note that in the special case of being interested in an aggregate social welfare, a system of many local optimizers (agents) could be highly inefficient if compared with a global optimizer, who acts for the 'goodness' of the system, as in the case of branch and cut. Also the increase of elementary operation of the barter algorithm is much higher than the one of the branch and cut, particularly when the direction of movement is selected in a best-improve way, as it is shown in Table \ref{tab:SER_BranchCut}. The economical interpretation suggests that if the time taken to reach an equilibrium allocation is too long, it is possible that this equilibrium is eventually never achieved in real social systems, where perturbing events (change in preferences, appetence of new types of commodities, etc.) might take place.}.

%---------------------------------------------------------------------------------------------------------------------------------

\subsection{The effect of preferences, prices, endowments}\
\label{Subsection:6.3.2}

%---------------------------------------------------------------------------------------------------------------------------------

The aim of this section is to study how the initial condition of the economy, that is to say, preferences, prices and endowments, are able to affect the computational performance of the barter processes previously defined and the emerging social structure of economical interaction.

A first question when sequences of elementary reallocations are studied might be related to the analysis of which initial condition of the system is more likely to  affect the number of non dominated allocations (improving directions), the number of negotiations (neighborhoods explored) and the emerging structure of interaction among agents.

To study the number of non dominated allocations obtained as a result of sequences of elementary reallocations, a method for the enumeration of all possible non-dominated paths from the known initial endowments is described. To do so, the $m(m-1)n(n-1)/4$ directions are explored in each step in such a way that a bundle non dominated reallocations are kept. Let $r$ be the number of non-dominated reallocation in the first iteration; for each $i = 1, \ldots, r$ a collection of $l_i \leq m(m-1)n(n-1)/4$ non-dominated directions are obtained. The bundle of non-dominated solutions are thus updated in each wave by adding and allocation in accordance with this enumerative procedure.

This procedure requires a method to find Pareto-optimal vectors each time $m(m-1)n(n-1)/4$ ERPs are solved. Corley and Moon \cite{CorleyMoon1985} proposed an algorithm to find the set $V^*$ of Pareto vectors among \emph{r} given vectors $\mathcal{V} = \{v_{1},v_{2}, \ldots, v_{r}\}$, where $ v_{i} = (v_{i1} ,v_{i2}, \ldots, v_{in}) \in \rset^{n}$, $i=1, 2, \ldots, r$. Sastry and Mohideen \cite{SastryMohideen1999} observed that the latter algorithm is incorrect and presented a modified version. In our implementation of the the best-improve barter process, we use the modified Corley and Moon algorithm of \cite{SastryMohideen1999}, shown below.

\small
\begin{description}
\item[Step 1.] Set $i=1$, $j=2$.
\item[Step 2.] If $i = r-1$, go to Step 6. For $k = 1,2, \ldots, n $, if $v_{jk}\leq v_{ik}$, then go to Step 3; else, if $v_{ik}\leq v_{jk}$, then go to Step 4. Otherwise, go to Step 5.
\item[Step 3.] Set $i = i+1$, $j=i+1$; go to Step 2.
\item[Step 4.] If $j = r$, put $v_{i} \in v \min V$ and $v_{j} = \{ \infty, \infty ,\ldots, \infty\}$; go to Step 3. Otherwise, set $v_{jk} = v_{rk}$, where $k = 1,2, \ldots, n $; set $r = r-1$ and go to Step 2.
\item[Step 5.] If $j = r$, put $v_{i} \in v \min V$; go to Step 3. Otherwise, set $j=j+1$ and go to Step 2.
\item[Step 6.] For $k = 1,2, \ldots, m $, if $v_{jk}\leq v_{ik}$, then put $v_{j} \in v \min V$ and stop; else, if $v_{ik}\leq v_{jk}$, then put $v_{i} \in v \min V$ and stop; Otherwise, put $v_{i}, v_{j} \in v \min V$ and stop.
\end{description}
\normalsize

The nice property of the modified Corley and Moon algorithm is that it doesn't necessarily compare each of the $r(r-1)/2$ couples of vectors for each of the $n$ components. This is actually what the algorithm do in the worst case, so that the complexity could be written as $\mathcal{O}(nr^{2})$, which is linear with respect of the dimension of the vectors and quadratic with respect to the number of vectors.

The pseudo-code to generate all sequences of elementary reallocations for $n$ linear agents, keeping the Pareto-improvement in each interaction, is shown in Algorithm \ref{alg:SER_multiobjective}.

\begin{algorithm}[H]
\begin{algorithmic}[1]
{\small
\STATE Initialize the endowments $E = {<\mathbf{q}^1,\ldots,\mathbf{q}^n>}$ and utilities $U = {<u^1,\ldots,u^n>}$.
\STATE Initialize the incumbent allocations $\widetilde{E}^t = \{E\}$ and the incumbent utilities $\widetilde{U}^t = \{U\}$.
\REPEAT
    \FOR{$\mathbf{v} \in \widetilde{E}^t$}
%       \STATE Let $Q_v (h\, k\, i\, j) = \{q^{+}_v(h\, k\, i\, j), q^{-}_v(h\, k\, i\, j)\}$
        \STATE Let $<S_\mathbf{x}, G_\mathbf{x}> $ be the set of movements and utilities
        $\{(\mathbf{x} + \alpha S_{ij}^{kh}$, $c'(\mathbf{x} + \alpha
        S_{ij}^{kh}))\}$ for each couple of commodities and agents $(i,j,k,h)$ and $\alpha \in \{\alpha^{down}(i,j,k,h),\alpha^{up}(i,j,k,h)\}$
    \ENDFOR
        \STATE Let $<S, G> = \bigcup_{x \in \widetilde{E}} <S_\mathbf{v}, U_\mathbf{v}>$ and $<S, G> = CorleyMoon(<S, G>)$
        \STATE Let $\widetilde{E}^{t+1} = \widetilde{E}^{t} \cup S $ and $\widetilde{U}^{t+1} = \widetilde{U}^{t} \cup G$
        \STATE Let $t = t+1$
\UNTIL{$\widetilde{E}^t = \widetilde{E}^{t-1}$}}
\end{algorithmic}
\caption{\label{alg:SER_multiobjective} Generating paths of all improving directions of movement}
\end{algorithm}

The function $CorleyMoon()$ applies the modified Corley and Moon algorithm to a set of utility vectors and allocation vectors and return the Pareto-efficient utility vectors with the associated allocations.

\begin{figure}[H]
\begin{multicols}{2}
\centering
\scalebox{0.70}{
\begin{tabular}{l | l l l l |l}
    \hline
\mbox{            }  \; \,  &\multicolumn{3}{c}{allocations} \mbox{          }  & \qquad & utilities \mbox{    } \\ \hline
                             \\
    iteration 0             &  18 3 3     & 13 4 55   & 22 2 2   & \qquad & 1422 559 1220     \\ \hline
                             \\
    iteration 1             &  21 3 0     & 10 4 58   & 22 2 2   & \qquad & 1608 574 1220     \\
                            &  18 3 3     & 11 4 57   & 24 2 0   & \qquad & 1422 569 1324     \\ \hline
                             \\
    iteration 2             &  24 0 0     & 7  7 58   & 22 2 2   & \qquad & 1800 571 1220     \\
                            &  19 5 0     & 10 4 58   & 24 0 2   & \qquad & 1480 574 1326     \\
                            &  21 3 0     & 8  6 58   & 24 0 2   & \qquad & 1608 572 1326    \\
                            &  21 3 0     & 8  4 60   & 24 2 0   & \qquad & 1608 584 1324    \\
                            &  21 3 0     & 10 6 56   & 22 0 4   & \qquad & 1422 567 1430    \\
                            &  21 0 3     & 8  7 57   & 24 2 0   & \qquad & 1614 566 1324    \\ \hline
                             \\
    iteration 3             &  21 3 0     & 8  4 60   & 24 2 0   & \qquad & 1608 584 1324    \\
                            &  22 2 0     & 7  7 58   & 24 0 2   & \qquad & 1672 571 1326     \\
                            &  24 0 0     & 5  9 58   & 24 0 2   & \qquad & 1800 569 1326     \\
                            &  24 0 0     & 5  7 60   & 24 2 0   & \qquad & 1800 581 1324     \\
                            &  24 0 0     & 7  9 56   & 22 0 4   & \qquad & 1614 564 1430     \\
                            &  19 5 0     & 8  4 60   & 26 0 0   & \qquad & 1480 584 1430     \\
                            &  19 5 0     & 10 2 60   & 24 2 0   & \qquad & 1480 586 1324     \\
                            &  21 1 2     & 8  6 58   & 24 2 0   & \qquad & 1608 582 1430     \\ \hline
                             \\
    iteration 4             &  21 3 0     & 8  4 60   & 24 2 0   & \qquad & 1608 584 1324     \\
                            &  24 0 0     & 5  7 60   & 24 2 0   & \qquad & 1800 581 1324     \\
                            &  19 5 0     & 8  4 60   & 26 0 0   & \qquad & 1480 584 1430     \\
                            &  19 5 0     & 10 2 60   & 24 2 0   & \qquad & 1480 586 1324     \\
                            &  21 1 2     & 8  6 58    & 24 2 0   & \qquad & 1608 582 1430     \\
                            &  21 3 0     & 8  6 58    & 24 0 2   & \qquad & 1800 579 1430     \\
                            &  22 0 2     & 7  9 56    & 24 0 2   & \qquad & 1672 581 1430     \\
                            &  24 0 0     & 7  7 58    & 22 2 2   & \qquad & 1736 582 1324     \\
                            &  20 2 2     & 7  7 58    & 26 0 0   & \qquad & 1672 583 1324     \\ \hline
                             \\
    iteration 5             &  21 3 0     & 8  4 60    & 24 2 0   & \qquad & 1608 584 1324    \\
                            &  24 0 0     & 5  7 60    & 24 2 0   & \qquad & 1800 581 1324    \\
                            &  19 5 0     & 8  4 60    & 26 0 0   & \qquad & 1480 584 1430    \\
                            &  19 5 0     & 10 2 60    & 24 2 0   & \qquad & 1480 586 1324     \\
                            &  21 1 2     & 8  6 58    & 24 2 0   & \qquad & 1608 582 1430    \\
                            &  21 3 0     & 8  6 58    & 24 0 2   & \qquad & 1800 579 1430    \\
                            &  22 0 2     & 7  9 56    & 24 0 2   & \qquad & 1672 581 1430    \\
                            &  24 0 0     & 7  7 58    & 22 2 2   & \qquad & 1736 582 1324    \\
                            &  20 2 2     & 7  7 58    & 26 0 0   & \qquad & 1672 583 1324    \\
                            &  21 0 3     & 8  7 57    & 24 2 0   & \qquad & 1544 583 1430    \\ \hline
\end{tabular}}
\includegraphics[scale=0.094]{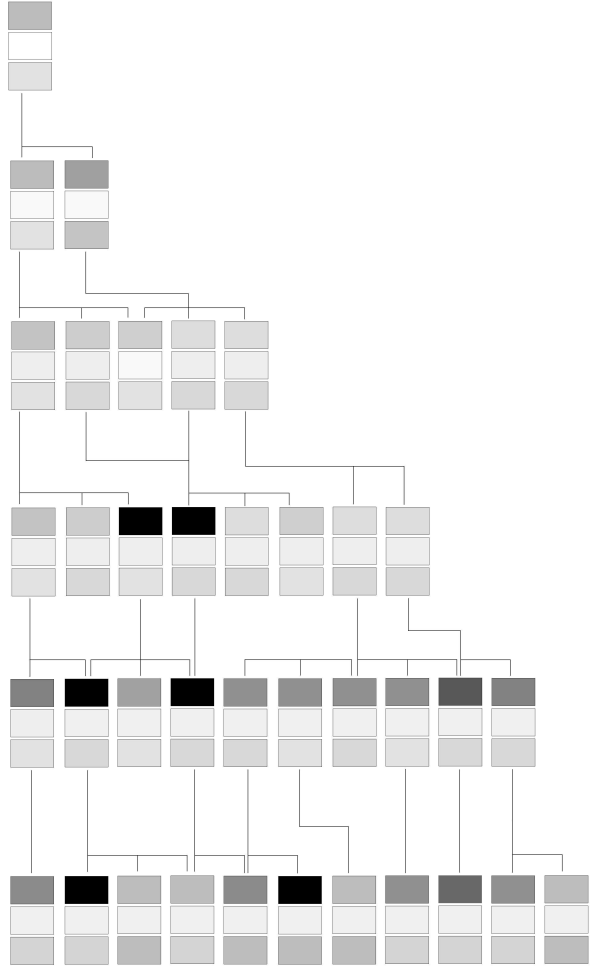}
\end{multicols}
\caption{ \label{fig:WorkedExampleMultiobjective} \footnotesize Worked example of the generation of all possible SERs, as described in algorithm \ref{alg:SER_multiobjective}.}
\end{figure}

Consider a barter process of 3 commodities among 3 agents and let the initial endowments be $\mathbf{q}^{1} = (18~ \, 3~ \, 3) $, $\mathbf{q}^{2} = (13~ \, 4~ \, 55)$ and $\mathbf{q}^{3} = (22~ \, 2~ \, 2) $. The coefficients of the linear objective functions are $\mathbf{c}^{1} = (75~ \, 11~ \, 13) $, $\mathbf{c}^{2} = (4~ \, 3~ \, 9)$ and $\mathbf{c}^{3} = (55~ \, 2~ \, 3) $. Starting from the initial solution, the sequence of two-agent-two-commodity barter leads to the movements of Figure \ref{fig:WorkedExampleMultiobjective}.

The scale of grey denotes the utility level. Starting from the initial endowments, 28 different stories of elementary reallocations might be generated, although many of them lead to the same stable allocation (local optima). We found 11 stable allocations which might be reached by some sequence of elementary allocation keeping the Pareto-optimality in each ERP.

We consider a theoretical case where $2$ agents with linear utility functions have to trade $9$ commodities. The following three factors are taken into account:
\small
\begin{itemize}
\item[-] $Fact_1$: the variability of prices;
\item[-] $Fact_2$: association between $\mathbf{q}^{1}$ and $\mathbf{c}^{1}$ and between $\mathbf{q}^{2}$ and $\mathbf{c}^{2}$ (\emph{initial stability});
\item[-] $Fact_3$: association between $\mathbf{q}^{1}$ and $\mathbf{c}^{2}$ and between $\mathbf{q}^{2}$ and $\mathbf{c}^{1}$ (\emph{dissortative matching}).
\end{itemize}
\normalsize
The aforementioned factors are measured at three levels and $4$ randomized replicates have been simulated for each combination of factors. A multivariate analysis of variance (MANOVA) is performed, considering the two following response variables
\small
\begin{itemize}
\item[-] $Resp_1$: the number of non dominated allocations;
\item[-] $Resp_2$: the number of neighborhoods explored.
\end{itemize}
\normalsize
The MANOVA results\footnote{The multiple analysis of variance is used to compare multivariate (population) means of several combinations of factors. The third and fourth columns of Table \ref{fig:MANOVA_Multiobjective_SER} report commonly used test statistics which provide a p-value assuming an $F$ distribution under the null hypothesis.} in Table \ref{fig:MANOVA_Multiobjective_SER} illustrates the effects and the significance of $Fact_3$, corresponding to the association between the initial endowments and the marginal utilities of opposite agents.  The correlation between the amounts of the initial endowments and the coefficients of the objective function of the same agent does not appear by itself to have a significant effect on the response variables.

\begin{table}[H]
\begin{center}
\scalebox{0.93}{
\begin{tabular}{l | l l l l }
\hline
                                       & df     & Pillai      & approx $F$  & p-value         \\ \hline

$Fact_1$                               & 2      & 0.098426    & 2.3033      & 0.06028         \\
$Fact_2$                               & 2      & 0.034673    & 0.7851      & 0.53624         \\
$Fact_3$                               & 2      & 0.133653    & 3.1867      & 0.01474         \\
$Fact_1 \times Fact_2$                 & 4      & 0.037110    & 0.4207      & 0.90758         \\
$Fact_1 \times Fact_3$                 & 4      & 0.070324    & 0.8109      & 0.59384         \\
$Fact_2 \times Fact_3$                 & 4      & 0.166118    & 2.0155      & 0.04701         \\ \hline
Residuals                              & 89     &             &             &                 \\ \hline
\end{tabular}}
\caption{\label{fig:MANOVA_Multiobjective_SER} \footnotesize{MANOVA analysis of the paths of all improving directions}}
\end{center}
\end{table}

The graphical illustration in Figure \ref{fig:Multiobjective_SER} supports the MANOVA results, by showing the values of the two response variables for each level of the factors. The price variability seems to have a non-linear effect to both response variables (left panel). The association between the initial endowment and the marginal utility of the same agent doesn't seem to produce a consistent change in the number of neighborhoods explored (red line in the central panel), though it does have a clear average linear effect on the number of non-dominated allocations. Differently, the correlation between the initial endowment of an agent and the coefficients of the utility function of the other exhibits negative associations with the two response variables.

\begin{figure}[H]
\begin{center}
\includegraphics[height=100mm, width=145mm]{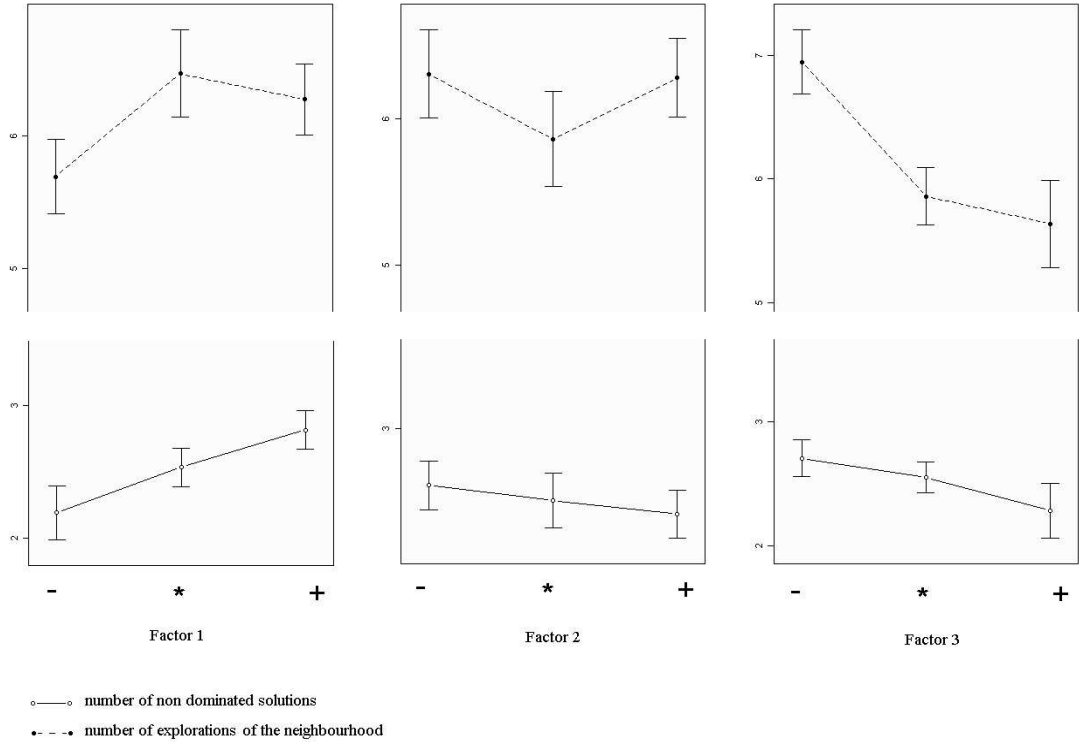}
\caption{\label{fig:Multiobjective_SER} \footnotesize{The numerical results associated to Table \ref{fig:MANOVA_Multiobjective_SER} are shown. The dotted red lines denote the number of non-dominated allocations, whereas the continuous green lines denote the number of of neighborhoods explored.}}
\end{center}
\end{figure}

This experimental result should interpreted as exploratory and aiming to provide clues and suggestions for further analysis about the effect of the initial condition of the system on the outcomes and performance of the SERs. In this respect, the significant effects of dissortative matching advise for the analysis of the dissortative behavior of the economical interaction network.

Any SER intrinsically gives rise to two types of network structures generated by the set of couples of agents interacting along the process:
\small
\begin{itemize}
\item[-] the \emph{between--node--interaction network} (whose edge set is represented by the number of exchanges, that is to say, the number of times a ERP is solved per each couple of agents),
\item[-] the \emph{between--node--flow network} (whose edge set is represented by amount of exchanged commodities for each couple of agents).
\end{itemize}
\normalsize
Both networks can be seen as dynamically changing along the process. Such structures might be statistically analyzed in term of their topological properties. We consider three kind of assortativity measures reflecting the preference for an agent to interact with others that are similar or different in some ways:
\small
\begin{itemize}
\item[-] $Type_1$: couples of agents with highly different marginal utilities are more often commercial partners: $\rho(\delta(c^h, \, c^k ), x_{hk})$;

\item[-] $Type_2$: agents who are more sociable (trade more often) interact frequently with agents who are not sociable: $\rho(\delta(f_h, \, f_k), x_{hk})$--;

\item[-] $Type_3$: the more two agents are different with respect to their marginal utilities, the more they are different with respect to their commercial interactions: $\rho( \delta(c^h, \, c^k ), \delta(f_h, \, f_k ))$.
\end{itemize}
\normalsize
The Greek letter $\delta$ denotes the Euclidean distance, $\rho$ is the Pearson correlation, $x_{hk}$ is the valued of the connection between agent $h$ and $k$ and $f_h$ is the total value of connections of agent $h$, corresponding to the $h^{th}$ row of the AM. The numerical values in Table \ref{tab:Dissortativity} corresponds to the aforementioned assortativities applied to the \emph{interaction network}, corresponding to the instances of Table \ref{tab:SER_BranchCut}.

\begin{table}[H]
\begin{center}
\scalebox{0.83}{
\begin{tabular}{r|rrr|rrr}
  \hline
\multirow{2}{*}{Size} & \multicolumn{3}{c|}{Between--node--flow} & \multicolumn{3}{c}{Between--node--interaction}\\
%\hhline{~------}
     & $Type_1$ & $Type_2$ & $Type_3$ & $Type_1$ & $Type_2$ & $Type_3$\\
  \hline
  10 &  0.40 & 0.48 & 0.63 & 0.70 & 0.67 & 0.74  \\
  10 &  0.46 & 0.66 & 0.61 & 0.85 & 0.63 & 0.74  \\
  10 &  0.60 & 0.48 & 0.75 & 0.71 & 0.70 & 0.75  \\

  15 &  0.47 & 0.31 & 0.62 & 0.74 & 0.48 & 0.56  \\
  15 &  0.33 & 0.36 & 0.58 & 0.58 & 0.44 & 0.67  \\
  15 &  0.24 & 0.48 & 0.53 & 0.56 & 0.74 & 0.66  \\

  20 &  0.28 & 0.41 & 0.61 & 0.39 & 0.62 & 0.54  \\
  20 &  0.23 & 0.18 & 0.46 & 0.54 & 0.48 & 0.55  \\
  20 &  0.12 & 0.06 & 0.37 & 0.48 & 0.45 & 0.42  \\

  25 &  0.14 & 0.18 & 0.39 & 0.55 & 0.66 & 0.53  \\
  25 &  0.36 & 0.32 & 0.60 & 0.65 & 0.56 & 0.66  \\
  25 &  0.14 & 0.17 & 0.51 & 0.48 & 0.70 & 0.49  \\

  30 &  0.09 & 0.08 & 0.40 & 0.42 & 0.55 & 0.53  \\
  30 &  0.24 & 0.20 & 0.67 & 0.56 & 0.62 & 0.68  \\
  30 &  0.26 & 0.33 & 0.60 & 0.61 & 0.63 & 0.65  \\

  35 &  0.11 & 0.29 & 0.40 & 0.44 & 0.59 & 0.43  \\
  35 &  0.14 & 0.28 & 0.50 & 0.46 & 0.55 & 0.48  \\
  35 &  0.14 & 0.26 & 0.49 & 0.46 & 0.58 & 0.53  \\

  40 &  0.25 & 0.22 & 0.53 & 0.44 & 0.64 & 0.58  \\
  40 &  0.28 & 0.23 & 0.58 & 0.68 & 0.52 & 0.64  \\
  40 &  0.26 & 0.18 & 0.69 & 0.64 & 0.64 & 0.60  \\

  45 &  0.23 & 0.30 & 0.55 & 0.62 & 0.60 & 0.54  \\
  45 &  0.29 & 0.24 & 0.61 & 0.57 & 0.59 & 0.58  \\
  45 &  0.21 & 0.21 & 0.63 & 0.58 & 0.57 & 0.61  \\

  50 &  0.08 & 0.28 & 0.36 & 0.35 & 0.55 & 0.32  \\
  50 &  0.16 & 0.32 & 0.41 & 0.45 & 0.62 & 0.42  \\
  50 &  0.24 & 0.17 & 0.60 & 0.51 & 0.50 & 0.65  \\

  55 &  0.14 & 0.53 & 0.17 & 0.39 & 0.52 & 0.48  \\
  55 &  0.17 & 0.33 & 0.38 & 0.29 & 0.53 & 0.44  \\
  55 &  0.19 & 0.37 & 0.38 & 0.47 & 0.56 & 0.43  \\

  60 &  0.35 & 0.45 & 0.60 & 0.54 & 0.57 & 0.62 \\
  60 &  0.20 & 0.30 & 0.43 & 0.34 & 0.50 & 0.52 \\
  60 &  0.16 & 0.38 & 0.29 & 0.39 & 0.51 & 0.48 \\
   \hline
\end{tabular}}
\caption{ \label{tab:Dissortativity}
\footnotesize {Three types of network assortativity.}}
\end{center}
\end{table}

The significative effect of $Fact_3$ (the association between the initial endowment and the marginal utility of the other agent) in the MANOVA of Table \ref{fig:MANOVA_Multiobjective_SER} seems coherent with the $Type_1$ assortativity reported in Table \ref{tab:Dissortativity}, in the vague sense that the difference in the agents marginal utilities is likely to result in high exchange opportunities for agents and, conversely, in many possible convenient allocations (in the sense of Pareto).

Surprisingly, as far as the network corresponding to the between--node--flow is concerned, the $Type_3$ assortativity appear comparatively higher then the others. It might be argued that this is due to the fact that nodes with similar marginal utilities have similar abilities in catching the same exchange opportunities existing in the market. An analogous result is observed for the networks corresponding to the between--node--interaction.

Regarding the $Type_2$ dissortativity of the between--node--interaction, the values in Table \ref{tab:Dissortativity} provide a clear connections with the results of Cook et al. \cite{Cook1983}, who observed that most central nodes (in the sense of eigenvector centrality) were not the most successful in achieving high bargaining power. It can be argued that this achievement relies on his/her connections with poorly connected nodes\footnote{This results contradict most social psychological literature showing that, in experimentally restricted communication networks, the leadership role typically devolves upon the individual in the most central position \cite{Leavitt1951, Berkowitz1956}}, as noted by Bonacich \cite{Bonacich1987}:
\small
\begin{quotation}
\hspace{-6mm} \emph{in bargaining situations, it is advantageous to be connected to those who have few options; power comes from being connected to those who are powerless. Being connected to powerful others who have many potential trading partners reduces one's bargaining power.}
\end{quotation}
\normalsize

Note that the goodness of being connected with powerful or powerless neighbors depends on the type of commodity flowing within the network. If the utility of nodes are related to the amount of obtained information, the non rival nature of information suggests a positive association between the power of a node and the power of its neighbors.

The dissortative behavior of the valued networks generated by the barter process can be probabilistically analyzed using conditionally uniform random network models. For each of the three problems of size $60$ in Table \ref{tab:SER_BranchCut}, the results in Table \ref{tab:SER_CURN_1} show the sample mean and standard deviation of the clustering coefficient and assortativity coefficient of a sample of $20.000$ valued networks with fixed density (summation of the AM components).

\begin{table}[H]
\begin{center}
\scalebox{0.87}{
\begin{tabular}{rrrrrrrr}
\hline
Network                                                           & Property & sample mean & sample std.   & observed value & one tail p-value  & corr CC -- AC  \\
\hline
                                                                  &   CC     &   0.0583    &  0.0099       &  0.0107        &  0.9951           &  0.1075        \\
                                                                  &   AC     &  -0.0181    &  0.0054       & -0.0454        &  0.0000           &                \\
\hhline{~-------}
                                                                  &   CC     &   0.0613    &  0.0114       &  0.0101        &  0.9951           &  -0.0847       \\
\multirow{2}{*}{\rotatebox{90}{Flow}} $\quad$                     &   AC     &  -0.0196    &  0.0056       & -0.0491        &  0.0000           &                \\
\hhline{~-------}
                                                                  &   CC     &   0.0615    &  0.0096       &  0.0390        &  0.9974           &  0.1387        \\
                                                                  &   AC     &  -0.0188    &  0.0058       & -0.0316        &  0.0379           &                \\
\hline
                                                                  &   CC     &   0.0901    &  0.0092       &  0.0822        &  0.7832           &  -0.1125       \\
\multirow{2}{*}{\rotatebox{90}{$\quad$ Interaction}} $\quad$      &   AC     &  -0.0220    &  0.0110       & -0.0454        &  0.0220           &                \\
\hhline{~-------}
                                                                  &   CC     &   0.1085    &  0.0050       &  0.1125        &  0.0992           &   0.1344       \\
                                                                  &   AC     &  -0.0221    &  0.0115       & -0.0491        &  0.0027           &                \\
\hhline{~-------}
                                                                  &   CC     &   0.1125    &  0.0042       &  0.1178        &  0.0576           &  -0.0250       \\
                                                                  &   AC     &  -0.0203    &  0.0128       & -0.0326        &  0.0411           &                \\                              \hline
\end{tabular}}
\caption{\label{tab:SER_CURN_1} \footnotesize Numerical results from the sample obtained with the $q$-kernel method, for each of the six networks associated to the the three barter processes of size $60$ in Table \ref{tab:SER_BranchCut}. The model is based on the conditionally uniform distribution of valued networks with fixed density (summation of the AM components). The sixth column reports the left-tailed p-values.}
\end{center}
\end{table}

Similarly, for the same samples of Table \ref{tab:SER_CURN_1}, the results in Table \ref{tab:SER_CURN_2} show the sample mean and standard deviation of the clustering coefficient and assortativity coefficient of a sample of $10.000$ valued networks with fixed row marginal of the AM generated by the $q$-kernel method.

\begin{table}[H]
\begin{center}
\scalebox{0.87}{
\begin{tabular}{rrrrrrrr}
\hline
Network                                                           & Property & sample mean & sample std.   & observed value & one tail p-value  & corr CC -- AC  \\
\hline
                                                                  &   CC     &   0.0170    &  0.0031       &  0.0107        &  0.9833           &  0.0462        \\
                                                                  &   AC     &  -0.0079    &  0.0053       & -0.0454        &  0.0000           &                \\
\hhline{~-------}
                                                                  &   CC     &   0.0177    &  0.0041       &  0.0101        &  1.0000           &  0.0286        \\
\multirow{2}{*}{\rotatebox{90}{Flow}} $\quad$                     &   AC     &  -0.0064    &  0.0060       & -0.0491        &  0.0000           &                \\
\hhline{~-------}
                                                                  &   CC     &   0.0433    &  0.0072       &  0.0390        &  0.7895           &  -0.0462       \\
                                                                  &   AC     &  -0.0144    &  0.0073       & -0.0316        &  0.0092           &                \\
\hline
                                                                  &   CC     &   0.0515    &  0.0182       &  0.0822        &  0.1179           &  0.0067        \\
\multirow{2}{*}{\rotatebox{90}{$\quad$ Interaction}} $\quad$      &   AC     &  -0.0251    &  0.0102       & -0.0454        &  0.0339           &                \\
\hhline{~-------}
                                                                  &   CC     &   0.0848    &  0.0168       &  0.1125        &  0.0870           &  -0.1542       \\
                                                                  &   AC     &  -0.0173    &  0.0143       & -0.0491        &  0.0254           &                \\
\hhline{~-------}
                                                                  &   CC     &   0.0633    &  0.0169       &  0.6384        &  0.0332           &  0.0932        \\
                                                                  &   AC     &  -0.0154    &  0.0101       & -0.4786        &  0.0433           &                \\                              \hline
\end{tabular}}
\caption{\label{tab:SER_CURN_2} \footnotesize Numerical results from the sample obtained with the $q$-kernel method, for each of the six networks associated to the the three barter processes of size $60$ in Table \ref{tab:SER_BranchCut}. The model is based on the conditionally uniform distribution of valued networks with row marginal density. The sixth column reports the left-tailed p-values.}
\end{center}
\end{table}

The results in tables \ref{tab:SER_CURN_1} and \ref{tab:SER_CURN_2} are quite confirmatory, as the negative values of the CC and AC between row marginal can not be explained based on the supposed conditional randomness.

In a series of computational experiments Kang \cite{Kang2007} showed an interesting relationship between the variation at the individual level of a network and its assortative behavior. He found that when actors are connected with similarly central alters, the overall variation at the individual centralities (network centralization) is low.

The global picture emerging from the observed computational results strongly supports the previously discussed micro–macro linkages. This is particularly true when the dissortative pattern and the network centralization are taken into account \cite{Leavitt1951, Berkowitz1956, Cook1983, Kang2007}. Indeed, this strategic model of network formation is capable of internalizing many and varied assumption on agent behavior, allowing to test hypothesis on the arising of open and closed network structures from the economical interaction.

\section{Summary and future directions}\

We studied the use of barter processes for solving problems of bargaining on a discrete set, representing markets with indivisible goods and fixed exogenous prices. We showed that the allocation space is characterized by a block diagonal system of linear constraints, whose structural properties might be exploited in the construction and analysis of barter processes. Using Proposition \ref{prop:2} and the characterization of the space of integer solutions of the ERP, we were able to derive a constructive procedure to obtain its Pareto frontier, as shown by Corollary \ref{corol:1} and Corollary \ref{corol:2}.

Further research on this topic should include the characterization of the integer points in the null space of a general reallocation problem with fixed prices to obtain a closed form solution of a general problem of reallocating integer amounts of $m$ commodities among $n$ agents with fixed prices.

An open problem, which has not been investigated in this paper, is the formulation of equilibrium conditions for this rationing scheme proposed in Section 3, as suggested by Dreze \cite{Dreze1975} for the case of continuous allocation space.

In Section 4 we proposed a mathematical programming model for the problem of reallocating integer amounts of $m$ commodities among $n$ agents with fixed prices on a sparse network structure with nodal capacities. Further research on this issue should include a mathematical properties of a SER in dealing with markets with sparsely connected agents, as formulated in (\ref{eq:cons7}).


\begin{thebibliography}{00}

\bibitem{Auja1991}
Ahuja, R.K. , Magnanti, T.L., Orlin, J.B., (1991). \textit{Network Flows:
Theory, Algorithms, and Applications}. (1st ed.). Englewood Cliffs,
Prentice-Hall.

\bibitem{Debreu1983}
Arrow, K. J., Debreu, G., (1983). Existence of an equilibrium for a competitive
economy, \textit{Econometrica}, 22, 265--290.

\bibitem{AumanDreze1986}
Auman R., Dreze, J., (1986). Values of Markets with Satiation or fixed prices,
\textit{Econometrica}, 54, 1271--1318.

\bibitem{Axtell2005}
Axtell, R., (2005). The complexity of exchange, \textit{In Working Notes:
Artificial Societies and Computational Markets. Autonomous Agents 98 Workshop,
Minneapolis/St. Paul (May)}.

\bibitem{Bonacich1987}
Bonacich, P., (1987), Power and Centrality: A Family of Measures,
\textit{Journal of Mathematical Sociology}, 92, 1170--1182.

\bibitem{Bell1998}
Bell, A.M., (1998). Bilateral trading on network: a simulation study, \textit{In
Working Notes: Artificial Societies and Computational
Markets. Autonomous Agents 98 Workshop, Minneapolis/St. Paul (May)}, 1998.

\bibitem{Berkowitz1956}
Berkowitz, L, (1956). Personality and Position, \textit{Sociometry}, 19, 210--22.

\bibitem{CaplinLeahy2010}
Caplin A., Leahy, J., (2010). A Graph Theoretic Approach to Markets for Indivisible Goods,
\textit{Mimeo, New York University}, NBER Working Paper 16284.

\bibitem{Cook1983}
Cook, K. S., R. M. Emerson, M. R. Gilmore, and T. Yamagishi, (1983). The Distri- bution of Power in Exchange Networks: Theory and Experimental Results, \textit{American Journal of Sociology }, 89:275-305.

\bibitem{CorleyMoon1985}
Corley, H. W., Moon, I. D., (1985). Shortest path in network with vector weights,
\textit{Journal of Optimization Theory and Applications}, 46, 79--86.

\bibitem{DanilovKoshevoyMurota2001}
Danilov, V., Koshevoy, G., Murota, K., (2001). Discrete convexity and equilibria in economies with indivisible goods and money, \textit{Journal of Mathematical Social Sciences}, 41, 3, 251--273.

\bibitem{Dreze1975}
Dreze J. (1975), Existence of an exchange equilibrium under price rigidities.
\textit{International Economic Review}, 16, 2, 301--320.

\bibitem{Edgeworth1881}
Edgeworth, F.Y, (1932). Mathematical psychics, an essay on the application of
mathematics to the moral sciences, (3th ed.) \textit{L.S.E. Series of Reprints
of Scarce Tracts in Economics and Political Sciences}.

\bibitem{Feldman1973}
Feldman, A., (1973). Bilateral trading processes, pairwise optimality, and pareto
optimality, \textit{Review of Economic Studies}, XL(4) 463--473.

\bibitem{Haimes1971}
Haimes, Y.Y., Lasdon, L.S., Wismer, D.A., (1971). On a bicriterion formulation
of the problems of integrated system identification and system optimization,
\textit{IEEE Transactions on Systems, Man, and Cybernetics}, 1(3), 296--297.

\bibitem{Jevons1871}
Jevons, W.S, (1888). \textit{The Theory of Political economy}, (3rd ed.). London,
Macmillan.

\bibitem{Kaneko1982}
Kaneko, M., (1982). The Central Assignment Game and the Assignment Markets,
\textit{Journal of Mathematical Economics}, 10, 205--232.

\bibitem{Kang2007}
Kang, S.M., (2007). Equicentrality and network centralization: A micro–macro linkage, \textit{Social Networks}, 29, 585–-601.

\bibitem{Leavitt1951}
Leavitt, H. J, (1951). Some Effects of Certain Communication Patterns on Group Performance, \textit{Journal of Abnormal and Social Psychology}, 46:38-50.

\bibitem{Nash1951}
Nash, J.F, (1951). The bargaining problem, \textit{Econometrica} 18, 155--162.

\bibitem{NasiniCastroFonseca2014}
Nasini, S., Castro, J., Fonseca i Casas, P., (2014). A mathematical programming approach for different scenarios of bilateral bartering, \textit{Sort: Statistics and Operations Research Transactions}, 39, 85--108.

\bibitem{Ozlen2009}
Ozlen, M., Azizoglu, M., (2009). Multi-objective integer programming: a general
approach for generating all non-dominating solutions, \textit{European Journal
of Operational Research}, 199(1), 25--35.

\bibitem{Ozlen2011}
Ozlen, M., Azizoglu, M., Burton, B. A., (Accepted 2012). Optimising a nonlinear
utility function inmulti-objective integer programming, \textit{Journal of
Global Optimization}, to appear.

\bibitem{Quinzii1984}
Quinzii, M., (1984). Core and Competitive Equilibria with Indivisibilities,
\textit{International Journal of Game Theory}, 13, 41--60.

\bibitem{Rubinstein1983}
Rubinstein, A., (1983). Perfect Equilibrium in a bargaining model,
\textit{Econometrica} 50, 97--109.

\bibitem{SastryMohideen1999}
Sastry, V.N. , Mohideen, S.I., (1999). Modified Algorithm to Compute
Pareto-Optimal Vectors, \textit{Journal of Optimization Theory and
Applications}, 103, 241--244.

\bibitem{Scarf1994}
Scarf, H., (1994). The Allocation of Resources in the Presence of Indivisibilitie,
\textit{Journal of Economic Perspectives}, 8, 111--128.

\bibitem{ShapleyShubik1972}
Shapley, L., Shubik, M., (1972). The Assignment Game I: The Core,
\textit{International Journal of Game Theory}, 1, 111--130.

\bibitem{Uzawa1962}
Uzawa, H., (1962). On the stability of edgeworth's barter process,
\textit{International Economic Review}, 3(2), 218--232.

\bibitem{Vazirani2007}
Vazirani, V., Nisan, N., Roughgarden, T., Tardos, E., (2007),
\textit{Algorithmic Game Theory}, (1st ed.). Cambridge, Cambridge University Press.

\bibitem{Wilhite2001}
Wilhite, A., (2001). Bilateral trade andsmall-worl-networks,\textit{Computational Economics}, 18, 49--44.

\bibitem{Wooldridge2002}
Wooldridge, M., (2002). \textit{An Introduction to MultiAgent Systems}, (1st
ed.). Krst Sussex, John Wiley and Sons Ltd.

\end{thebibliography}
\end{document}